\newtheorem{theorem}{Theorem}
\newtheorem{lemma}{Lemma}[section]
\newtheorem{definition}{Definition}
\newdefinition{remark}{Remark}
\newproof{pf}{Proof}
\newcommand{\bra}{\langle}
\newcommand{\ket}{\rangle}
\journal{Applied and Computational Harmonic Analysis}
\begin{document}

\begin{frontmatter}



\title{Sampling based on timing: Time encoding machines on shift-invariant subspaces}


\author[dg]{David Gontier}
\ead{david.gontier@ens.fr}
\address[dg]{Department of Mathematics, École Normale Supérieure (ENS)\\ 45 rue d'Ulm, 75005 Paris, France}

\author[mv]{Martin Vetterli}
\ead{martin.vetterli@epfl.ch}
\address[mv]{School of Computer and Communications Sciences, École Polytechnique Fédérale de Lausanne (EPFL)\\ CH-1015 Lausanne, Switzerland}

\begin{abstract}
Sampling information using timing is an approach that has received renewed attention in sampling theory. The question is how to map amplitude information into the timing domain. One such encoder, called time encoding machine, was introduced by Lazar and Tóth in \cite{Lazar04} for the special case of band-limited functions. In this paper, we extend their result to a general framework including shift-invariant subspaces. We prove that time encoding machines may be considered as non-uniform sampling devices, where time locations are unknown \textit{a priori}. Using this fact, we show that perfect representation and reconstruction of a signal with a time encoding machine is possible whenever this device satisfies some density property. We prove that this method is robust under timing quantization, and therefore can lead to the design of simple and energy efficient sampling devices.

\end{abstract}

\begin{keyword}

Time encoding machine \sep Integrate-and-fire \sep Shift-invariant subspaces \sep Quantization \sep Reproducing kernels \sep Non-uniform sampling \sep Signal representation


\end{keyword}

\end{frontmatter}



\section{Introduction} 

Given a function $f(t)$ from a functional space $E$, sampling can be done in two different ways. In one approach, samples are taken at pre-defined sampling times $\{t_n\}_{n \in \mathbb{Z}}$, leading to a sequence $\{f_n\}_{n \in \mathbb{Z}} = \{f(t_n)\}_{n \in \mathbb{Z}}$. In this case, the question is when a set $\{t_n\}$ allows an exact representation of $f(t)$, or more fundamentally, a stable representation of $f(t)$. Examples of such sampling results are the Whittaker-Shannon sampling theorem \cite{Shannon49}, where $E$ is the set of band-limited functions $BL([-\Omega, \Omega])$ and $t_n = n (\pi / \Omega)$, sampling in shift-invariant subspaces \cite{Unser00}, or sampling of signals with finite rate of innovation (FRI) \cite{Vetterli02}.

Another, dual method, is sampling based on timing. Instead of recording the value of $f(t)$ at a preset time instant, one records the time at which the function takes on a preset value. Instead of the function itself, one may also consider the output of an operator $O[ \cdot ]$ applied to the function. For example, one may record the instants where $f(t)$ or $O[f](t)$ crosses a certain threshold.

Examples of such sampling methods include Logan's theorem for the representation of octave band functions from zero crossing \cite{Logan77}, various schemes generally known as \textit{delta-modulation}  \cite{Steel75}, as well as a method called \textit{time encoding machine} by Lazar and Tóth \cite{Lazar04} that mimics the \textit{integrate-and-fire} model of neurons. 

Why study such \textit{sampling with time} schemes ? On the one hand, the duality with respect to the more traditional \textit{sampling at preset time} makes it intriguing from a mathematical point of view. On the other hand, sampling by timing appears in nature, neurons being an example, and can lead to the implementation of very simple and low-cost sampling devices. A bucket that automatically empties itself once filled can be used as a time encoding machine to perform pluviometry. Finally, sampling by timing is potentially a more energy efficient way to acquire signals, since the basic elements (clocks, comparators, integrators,...) are lower power devices than high resolution analog-to-digital converters.

The purpose of the present paper is to analyze one class of \textit{sampling by timing} schemes, namely \textit{time encoding machines} of which the integrate-and-fire scheme of Lazar and Tóth \cite{Lazar04} is an example. The goal is to extend the validity of exact sampling by timing to broader classes of signals and operators, and to quantify robustness to timing quantization for these classes. By using the tools of frame theory and non-uniform sampling developed by Aldroubi, Feichtinger and Gröchenig \cite{Aldroubi98, Aldroubi01, Feich95, Groch92, Groch93, Schwab03}, it is possible to show that exact time encoding machines can be derived for a broad class of shift-invariant subspaces, and that these machines are robust to bounded timing noise.

The outline of the paper is as follows. Section 2 defines general time encoding machines (TEM) with two exemplary cases, crossing and integrate-and-fire TEMs. Section 3 reviews some properties of shift-invariant subspaces (SISS). In particular, the notion of reproducing kernel Hilbert spaces is reviewed. The main theorems linking density and invertibility of TEMs on SISS are presented in Section 4, as well as the fast reconstruction of the signal. Section 5 handles the special case of finite dimensional problems, to provide implementable algorithms. The question of stability under quantization noise is studied both theoretically and numerically in Section 6. Finally, possible research directions are indicated in Section 7.

\label{sec:Introduction}



\section{Time Encoding Machine}
\label{sec:TimeEncodingMachine}

We first extend the definition of Time Encoding Machine (or TEM) introduced by Lazar and Tóth \cite{Lazar04_1, Lazar06, Lazar04}:
\begin{definition}
	A \textit{Time Encoding Machine} is an operator $\mathcal{T}$ which maps a space $E$ of real valued functions to strictly increasing  sequences of reals:
	\[
		\begin{array}{lllll}
		\mathcal{T :} & E & \to & \mathbb{R}^{\mathbb{Z}} & \\
			& f(t) & \mapsto & \mathcal{T}f = \{t_n\} & \textnormal{with} \quad
				\left\{ \begin{array}{l}
						\quad \ldots < t_n < t_{n+1} < \ldots \\
						\lim\limits_{n \to \pm \infty} t_n = \pm \infty
					\end{array} \right.
		\end{array} . 
	\]
	The set $\{t_n\}$ denotes the \textit{sampling times}. We also say that $\mathcal{T}$ \textit{sampled} $f$ at time $t$ if $t \in \mathcal{T}f$. \\	
	We call a \textit{Time Decoding Machine} (or TDM) an operator $\mathcal{D}$ which maps an increasing sequence of reals into the space $E$. If $\mathcal{D} \circ \mathcal{T} = Id_E$, $\mathcal{T}$ is said to be \textit{invertible}, and $\mathcal{D}$ is an \textit{inverse} of $\mathcal{T}$. 
\end{definition}

The expression \textit{to sample} may be confusing, for no measure of amplitude is recorded. 
In practice, TEM are meant to encode a signal in real time, so that the fact that $\mathcal{T}$ samples $f$ at some time $t$ depends only on the set $\{ \left(t',f(t') \right), t' \le t\}$. We will only consider those types of TEMs in this paper. An important property of TEM is \textit{T-density}:
\begin{definition}
	A sequence $\{t_n\}_{n \in \mathbb{Z}}$ is $T$-dense if
	\[
		\forall n \in \mathbb{Z},  \quad t_{n+1} - t_n \le T.
	\]
	A TEM is $T$-dense if, for every input signals in $E$, the output is $T$-dense.
\end{definition}

We now present two common cases of TEMs, namely \textit{crossing TEM} and \textit{integrate-and-fire TEM}.


\subsection{Crossing TEM}
\label{ssec:CrossingTEM}

The study of TEMs is mainly motivated by their design and implementation. One easy example relies on a test function and a comparator only. The sampling times then correspond to the times when the signal crosses the test function (i.e. the difference of the two is null). An electronic implementation may be very energy efficient, for we do not need to measure amplitudes to encode the signal, but only the times when this amplitude is null, which requires a single comparator.

\begin{figure}[h]
\label{fig:TEM}
\hspace{-1.3 cm}
\subfloat[C-TEM without feedback]{
	\begin{tikzpicture}

		\node at (0,1.5) (f) [rectangle, draw, minimum height=25, minimum width=45] {$f(t)$};
		\node at (0,0) (psi) [rectangle, draw, minimum height=25, minimum width=45] {$\Psi(t)$};		
		\node at (3.5,0.8) (comparator) [rectangle, draw, minimum height=70, minimum width=80, text width=25mm] {\ \ Comparator $ f(t) = \Psi(t)$ ?};

		\node at (-1.5, 1.5) (-inf1) {};
		\node at (-1.5, 0) (-inf2) {};
		\node at (7, 0.8) (+inf) {};
		\node at (0, -0.8) {};
		\draw (-inf1) to (f);
		\draw (-inf2) to (psi);
		\draw[->] (f) to ( $(comparator.west) + (0,0.7)$);
		\draw[->] (psi) to ( $(comparator.west) - (0,0.8)$) ;
		\draw[->] (comparator) to (+inf);

		\foreach \x in {5.2, 5.7, 5.8, 6, 6.6} {
			\draw [->] (\x, 0.8) to (\x, 1.5);
		}
		\node at (5.8, 0.4) {$\{t_n\}$};
		
		\node at (2,-4) {\includegraphics[scale=0.5]{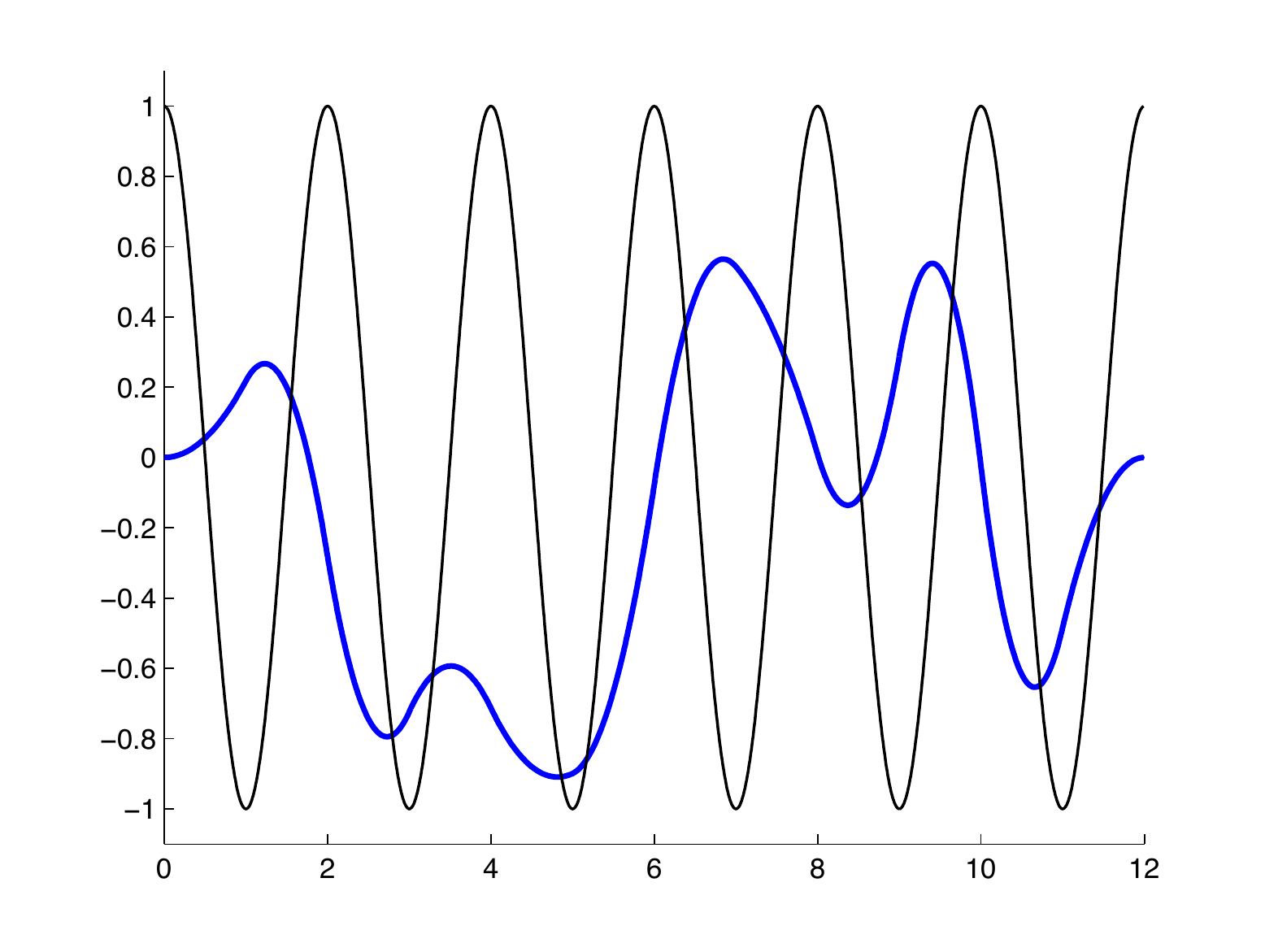}};
		\node at (2.2,-8) {\includegraphics[scale=0.33]{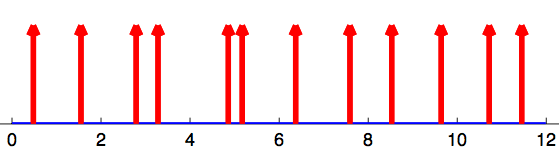}};

		\node at (5.6, -3.8) {\textcolor{blue}{$f(t)$}};
		\node at (5.6, -2) {$\Psi(t)$};
		\node at (2.7, -1.6) (tn) {$t_n$};
		\draw[->] (tn) to (2.9, -2.8);
		
		\node at (5.8, -8) {\textcolor{red}{$\{t_n\}$}};

	\end{tikzpicture}
}
\subfloat[C-TEM with feedback]{
	\begin{tikzpicture}

		\node at (0,1.5) (f) [rectangle, draw, minimum height=25, minimum width=45] {$f(t)$};
		\node at (0,0) (psi) [rectangle, draw, minimum height=25, minimum width=45] {$\Psi_n(t)$};		
		\node at (3.5,0.8) (comparator) [rectangle, draw, minimum height=70, minimum width=80, text width=25mm] {\ \ Comparator $ f(t) = \Psi_n(t)$ ?};

		\node at (-1.5, 1.5) (-inf1) {};
		\node at (-1.5, 0) (-inf2) {};
		\node at (7, 0.8) (+inf) {};
		\draw (-inf1) to (f);
		\draw (-inf2) to (psi);
		\draw[->] (f) to ( $(comparator.west) + (0,0.7)$);
		\draw[->] (psi) to ( $(comparator.west) - (0,0.8)$) ;
		\draw[->] (comparator) to (+inf);
		\draw (comparator.south) -- ( $(comparator.south) - (0,0.5)$) -- ( $(psi.south) - (0,0.5)$) edge[->] (psi.south);

		\foreach \x in {5.2, 5.7, 5.8, 6, 6.6} {
			\draw [->] (\x, 0.8) to (\x, 1.5);
		}
		\node at (5.8, 0.4) {$\{t_n\}$};
		
		\node at (2,-4) {\includegraphics[scale=0.5]{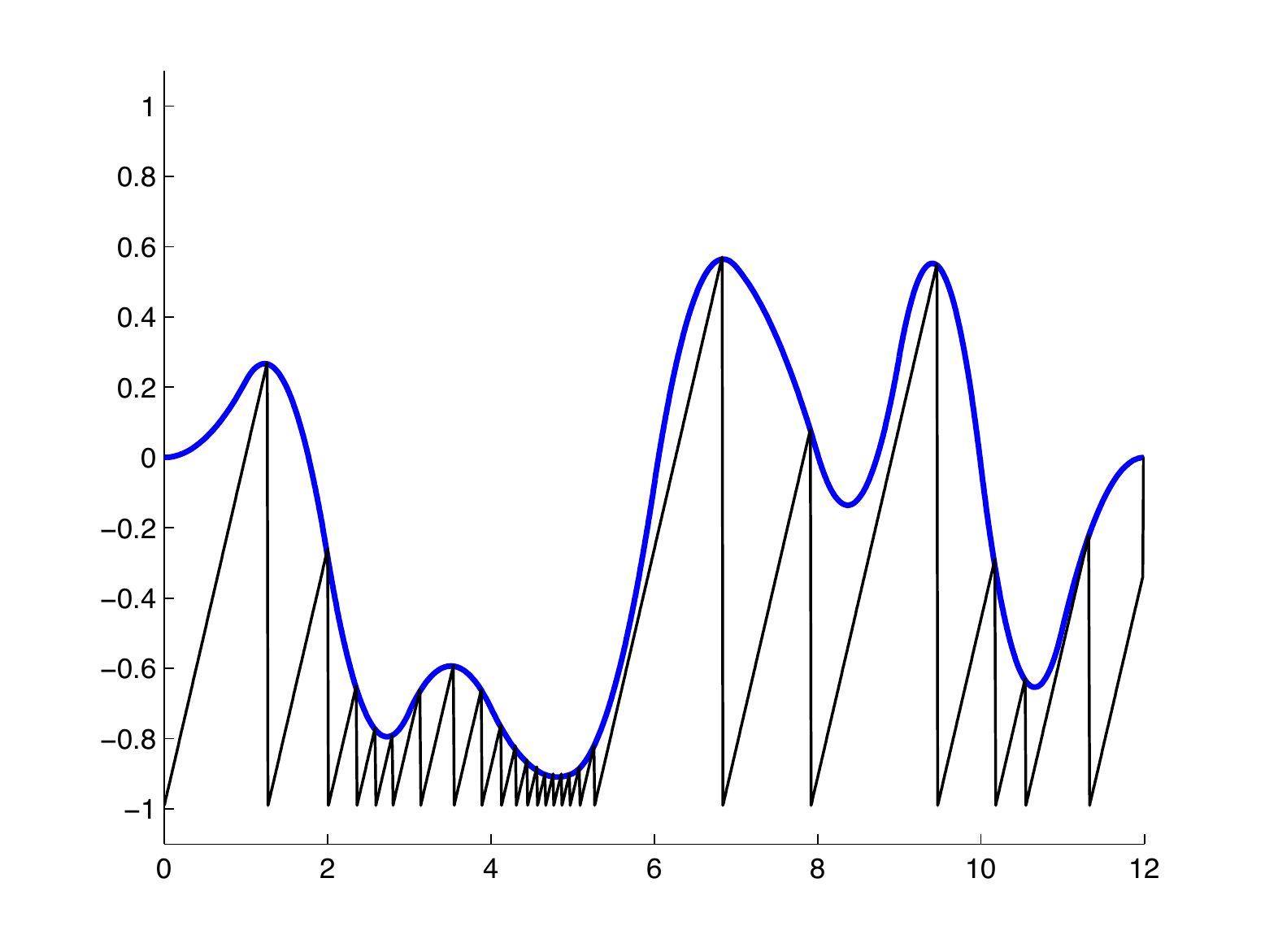}};
		\node at (2.2,-8) {\includegraphics[scale=0.33]{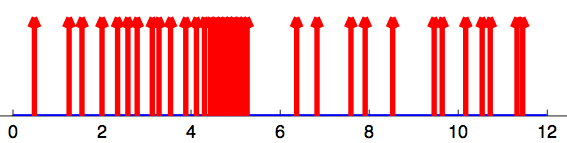}};
		
		\fill[white] (5.1, -4.8) rectangle (5.3, -3.91);
		
		\node at (5.6, -3.8) {\textcolor{blue}{$f(t)$}};
		\node at (5.6, -5.5) {$\Psi_n(t)$};
		\node at (4.7, -2.5) (tn) {$t_{n-1}$};
		\draw[->] (tn) to (4.8,-4.2);
		
		\node at (5.8, -8) {\textcolor{red}{$\{t_n\}$}};

	\end{tikzpicture}
}

\caption{Diagrams representing two types of Crossing Time Encoding Machine. The device (a) uses a cosine as a test function, and there is no feedback. The device (b) uses a uses linear test functions which are reset at each spike, it has a feedback.}

\end{figure}

\begin{definition}
	A crossing TEM (or C-TEM) with continuous test functions $\{ \Phi_n \}$, denoted $\mathcal{CT}_{\{\Phi_n\}}$, outputs the sequence $\{t_n\} = \mathcal{CT}_{\{\Phi_n\}}  f$ such that:
\begin{itemize}
	\item $\Phi_n$ may be recovered from the set $\{ t_i, \ i \le n-1\}$
	\item $f(t_n) = \Phi_n(t_n)$
	\item $f(t) \neq \Phi_n(t) \quad \forall t \in ]t_{n-1}, t_n[$
\end{itemize}
\end{definition}
For instance, $\Phi_n$ may be a rapidly increasing function, which is replaced by $\Phi_{n+1}$ as soon as the TEM samples. In this case, the crossing TEM has a feedback, and $t_{n-1}$ may act as a time reference, so that a time shift between an encoder and a decoder will not affect the reconstruction. On the other hand, if $\Phi_n(t) = \Phi(t)$ is independent of $n$, then, the output is exactly:
\[
	\mathcal{CT}_\Phi f = \{ t \in \mathbb{R}, \ f(t) = \Phi(t) \} ,
\]
so that we don't need a feedback with those C-TEMs, but the knowledge of some fixed time reference is important. Those notions are illustrated in Figure 1. Let us introduce two implementations of such devices. \\

We suppose first that $E \subset \{ f \in C(\mathbb{R}), \ \| f\|_\infty \le \delta\}$ with $\delta < 1$. Then, we can use
\begin{equation} \label{eq:CosTestFunction}
	\Phi_{T} = cos(\frac{2\pi}{T}t)  .
\end{equation}
According to the mean value theorem, there exists a firing time in each $] k (T/2), (k+1)(T/2) [$, $k \in \mathbb{Z}$, so that $\mathcal{CT}_{\Phi_T}$ is $T$-dense. Furthermore, $\Phi_T$ is easily implementable with a RLC circuit, and potentially energy efficient. We could also have used
\begin{equation}
	\Psi_{n,T} = -1 + 2 \cdot \frac{(t - t_{n-1})}{T},
\end{equation}
so that $\Psi_{n,T}(t_{n-1}) = -1, \ \Psi_{n,T}(t_{n-1} + T) = 1$ and $\Psi_{n,T}$ is continuous and strictly increasing. Again, the mean value theorem states that $t_n \in [t_{n-1}, t_{n-1}+T]$, hence $\mathcal{CT}_{\{\Psi_{n,T}\}}$ is $T$-dense. In this case, $\Psi_{n,T}$ is linear, and can be approximated by the exponential charge of a capacitor which is reset at each spike, making it again potentially energy efficient.


\subsection{Integrate-and-Fire TEM}
\label{ssec:IFTEM}

The study of neurons leads to a different class of TEMs. It is shown that a neuron may be well-represented by an integrate-and-fire TEM \cite{Dayan01}\cite{Gerstner02}:
\begin{definition}
	An Integrate-and-Fire TEM (or IF-TEM) with test functions $\{ \Phi_n \}$, denoted $\ \mathcal{IT}_{\{\Phi_n\}}$, outputs the sequence $\{t_n\} = \mathcal{IT}_{\{\Phi_n\}} \ f$ such that:
\begin{itemize}
	\item $\Phi_n$ may be recovered from the set $\{ t_i, \ i \le n-1\}$
	\item $\int_{t_{n-1}}^{t_n} f(u) du = \Phi_n(t_n)$
	\item $\int_{t_{n-1}}^{t} f(u) du \neq \Phi_n(t) \quad \forall t \in ]t_{n-1}, t_n[$
\end{itemize}
\end{definition}

Therefore, an IF-TEM is nothing but a C-TEM on the integrated signal: $F(t) = \int_{-\infty}^t f(u)du$. In addition to providing a suitable model for neurons, it may be interesting to study those TEMs for integrated signals since they have nice properties, such as being continuous and differentiable. We will show in Section \ref{ssec:adjoint} that integrate-and-fire TEMs are adjoint to crossing TEMs. \\

\section{Non-Uniform Sampling and Shift-Invariant Subspaces}
\label{sec:NUS&SISS}


Throughout this study, we will work within the context of shift-invariant subspaces. The purpose of this section is to review the tools related to this framework. The notion of reproducing kernel Hilbert space will also be recalled, as sampling theory  requires the samples to be well-defined. All those notions but Lemma \ref{th:rkhs} may be found in \cite{Aldroubi98, Aldroubi01, Aldroubi02, Chen02, Dau92, Feich90, Feich92, Mallat08}.

\subsection{Shift-Invariant Subspaces}
\label{ssec:SISS}

\begin{definition}
	A Shift-Invariant SubSpace (or SISS) generated by the real function $\lambda(t)$ and of order $p$ is defined by
	\[
		V^p(\lambda) := \left\{ f(t) = \sum_{k \in \mathbb{Z}} c_k \lambda(t-k), \  (c_k)_{k \in \mathbb{Z}} \in l^p(\mathbb{R}) \right\} .
	\]
\end{definition}

We use the term shift-invariant, but only shifts over $\mathbb{Z}$ are considered in this definition. A special case is the space of band-limited functions $B := \left\{ f \in L^2(\mathbb{R}), Supp(\hat f) \subset [-\pi, \pi] \right\} = V^2(sinc_\pi)$, where $sinc_\pi(t) = sin(\pi t)/(\pi t)$, and $\hat f = \mathcal F f$ is the classic Fourier transform:
\[
	\hat f( \omega) = \int_{-\infty}^{\infty} f(t) e^{-i \omega t} dt .
\]
$B$ is invariant under every shift. This result comes directly from the Shannon sampling theorem \cite{Shannon49}. 
We will work mainly in the Hilbert space $L^2(\mathbb{R})$ ($p=2$). We recall (see \cite{Dau92} or \cite{Mallat08}) that
\begin{equation} \label{eq:ps}
	\forall f, g \in V^2(\lambda), \quad  \bra f \mid g \ket = \frac{1}{2 \pi} \int_{0}^{2\pi} \hat c (\omega) \overline{\hat d(\omega)} G_\lambda(\omega)^2 d \omega
\end{equation}
where $\hat c(\omega) =\sum c_k \exp (-i k \omega)$, $\hat d(\omega) =\sum d_k \exp (-i k \omega)$, and the coefficients $c_k, d_k$ are respectively the one of $f$ and $g$: $ f = \sum_{k \in \mathbb{Z}}c_k \lambda(\cdot - k)$ and $g = \sum_{k \in \mathbb{Z}}d_k \lambda(\cdot - k)$. We have defined
\[
	G_\lambda(\omega) := \left( \sum_{k \in \mathbb{Z}} \mid \hat \lambda(\omega + 2 k \pi) \mid^2 \right)^{1/2} .
\]
We deduce from this that
\begin{lemma} \label{le:equiv}
	The norms $\| \cdot \|_{L^2}$ and $\| f \|_{l^2} := \sum_{k \in \mathbb{Z}} | c_k |^2$ are equivalent if and only if
	\[ \exists 0 < A \le B< \infty \quad \text{such that} \quad 0 < A \le G_\lambda(\omega) \le B <\infty .\]
\end{lemma}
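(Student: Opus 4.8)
The plan is to reduce everything to the quadratic form identity obtained from \eqref{eq:ps} by setting $g=f$, namely
\[
  \|f\|_{L^2}^2 \;=\; \frac{1}{2\pi}\int_0^{2\pi} |\hat c(\omega)|^2\, G_\lambda(\omega)^2\, d\omega ,
\]
together with Parseval's identity on the circle, $\|f\|_{l^2}=\sum_{k\in\mathbb{Z}}|c_k|^2=\frac{1}{2\pi}\int_0^{2\pi}|\hat c(\omega)|^2\,d\omega$ (here, following the paper, $\|f\|_{l^2}$ denotes $\sum_k|c_k|^2$). I would first record two elementary facts about $G_\lambda$: it is measurable, being the increasing pointwise limit of the measurable partial sums $\big(\sum_{|k|\le N}|\hat\lambda(\omega+2k\pi)|^2\big)^{1/2}$; and $G_\lambda^2\in L^1([0,2\pi])$, since $\int_0^{2\pi}G_\lambda(\omega)^2\,d\omega=\|\hat\lambda\|_{L^2(\mathbb{R})}^2<\infty$ by splitting $\mathbb{R}$ into the intervals $[2k\pi,2(k+1)\pi)$. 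In particular $G_\lambda(\omega)<\infty$ for a.e.\ $\omega$.

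The forward implication is then immediate: if $0<A\le G_\lambda(\omega)\le B<\infty$ for a.e.\ $\omega$, inserting these bounds into the displayed identity and comparing with the Parseval identity gives $A^2\|f\|_{l^2}\le\|f\|_{L^2}^2\le B^2\|f\|_{l^2}$ for every $f=\sum_k c_k\lambda(\cdot-k)$; a Cauchy-sequence argument on the partial sums (using $G_\lambda\le B$) also shows these defining series converge in $L^2$, so $V^2(\lambda)$ is genuinely a normed subspace and the two norms are equivalent.

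For the converse I would argue by contraposition, using characteristic functions as explicit test signals. If the essential supremum of $G_\lambda$ is infinite, then for each $B<\infty$ the set $S_B=\{\omega\in[0,2\pi): G_\lambda(\omega)>B\}$ has positive measure; take $\hat c=\mathbf{1}_{S_B}$, whose Fourier coefficients $(c_k)$ lie in $\ell^2$ with $\sum_k|c_k|^2=|S_B|/(2\pi)$, and whose associated series converges in $L^2$ because $\int_{S_B}G_\lambda^2\le\int_0^{2\pi}G_\lambda^2<\infty$, so $f=\sum_k c_k\lambda(\cdot-k)$ is a nonzero element of $V^2(\lambda)$. The displayed identity then gives $\|f\|_{L^2}^2=\frac{1}{2\pi}\int_{S_B}G_\lambda^2\ge B^2\,\|f\|_{l^2}$, and letting $B\to\infty$ rules out any finite upper constant. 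Symmetrically, if the essential infimum of $G_\lambda$ is $0$, then for each $A>0$ the set $\{\omega: G_\lambda(\omega)<A\}$ has positive measure, and the same construction produces a nonzero $f$ with $\|f\|_{L^2}^2\le A^2\|f\|_{l^2}$, excluding any positive lower constant. Hence norm equivalence forces $0<\operatorname{ess\,inf}G_\lambda$ and $\operatorname{ess\,sup}G_\lambda<\infty$, which is the contrapositive.

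The only delicate point — and the one I would flag as the main obstacle — is the legitimacy of the characteristic-function test signals in the converse: one must check that $\hat c=\mathbf{1}_S$ truly defines an element of $V^2(\lambda)$ (its coefficients are square-summable by Parseval, and the series converges in $L^2$ precisely because $G_\lambda^2$ is integrable) and that \eqref{eq:ps} applies to it. Working with indicators directly is what makes this clean: it avoids any density/approximation step and, crucially, sidesteps the temptation to pass from an integral inequality to an a.e.\ pointwise bound on $G_\lambda$; instead one simply exhibits, for each candidate constant, a concrete signal that violates it.
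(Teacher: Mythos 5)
Your overall strategy is the standard one (the paper itself gives no argument beyond ``we deduce from (\ref{eq:ps})'' and a pointer to \cite[chap.~5]{Mallat08}): reduce both norms to quadratic forms in $\hat c$ and read off the bounds on $G_\lambda$. The forward implication and the finiteness/measurability remarks about $G_\lambda$ are fine. The problem is exactly at the point you flag as the ``only delicate point'', and your justification there does not close it. In the case $\operatorname{ess\,sup}G_\lambda=\infty$ you take $\hat c=\mathbf{1}_{S_B}$ and assert that $f=\sum_k c_k\lambda(\cdot-k)$ belongs to $V^2(\lambda)$ ``because $\int_{S_B}G_\lambda^2<\infty$''. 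Integrability of $G_\lambda^2$ only tells you that the \emph{candidate limit}, the $L^2$ function with Fourier transform $\hat c(\omega)\hat\lambda(\omega)$, has finite norm. Membership in $V^2(\lambda)$ as the paper defines it requires the partial sums $\sum_{|k|\le N}c_k\lambda(\cdot-k)$ to converge, i.e.\ $\hat c_N\to\mathbf{1}_{S_B}$ in the weighted space $L^2(G_\lambda^2\,d\omega)$; when the weight is unbounded this does not follow from $G_\lambda^2\in L^1$, since the partial Fourier sums of the indicator of a general measurable set are not uniformly bounded, so neither dominated convergence nor a Cauchy estimate is available. So the very signals you use to defeat the upper constant are not shown to lie in the space, and (\ref{eq:ps}) has not been shown to apply to them.

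The repair is routine and keeps your structure. To rule out $\operatorname{ess\,sup}G_\lambda=\infty$, test the assumed inequality $\|f\|_{L^2}^2\le C\,\|f\|_{l^2}$ only on \emph{finitely supported} coefficient sequences (these are unambiguously in $V^2(\lambda)$), e.g.\ the Fej\'er means $\sigma_N$ of $\mathbf{1}_S$, which satisfy $0\le\sigma_N\le1$ and converge to $\mathbf{1}_S$ a.e.\ and in $L^2$; Fatou (or dominated convergence with dominant $G_\lambda^2$) then gives $\int_S G_\lambda^2\le C\,|S|$ for every measurable $S$, hence $G_\lambda^2\le C$ a.e. Once the upper bound is established, $G_\lambda$ is essentially bounded, the partial sums for an arbitrary $(c_k)\in\ell^2$ are Cauchy in $L^2$ (dominated by $B^2\|\hat c_N-\hat c_M\|_{L^2}^2$), and your indicator construction for excluding $\operatorname{ess\,inf}G_\lambda=0$ becomes fully legitimate. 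A cosmetic point in that last step: if $G_\lambda$ vanishes a.e.\ on the chosen set, the resulting $f$ is the zero function, so rather than ``a nonzero $f$'' you should say that the coefficient sequence is nonzero while $\|f\|_{L^2}$ is small (or zero), which is what actually contradicts the lower bound (and, in the degenerate case, even the injectivity needed for $\|f\|_{l^2}$ to be well defined).
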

Those properties are also equivalent to the fact that $\{ \lambda( \cdot - k ) \}_{k \in \mathbb{Z}}$ is a frame of $V^2(\lambda)$ (see \cite[chap. 5]{Mallat08}). 
\subsection{Reproducing Kernel Hilbert space}

In order to calculate the crossings of a function from $V^2(\lambda)$ with a test function, we would like $f(t)$ to be well-defined for each $t \in \mathbb{R}$.
\begin{definition}
	A Hilbert space $\mathcal{H}$ is called a reproducing kernel Hilbert space (or RKHS) if the linear operators
	\[
			 \begin{array}{lcll}
			K_{x_0} : & \mathcal{H} & \to & \mathbb{R} \\
				& f & \mapsto & f(x_0)
			\end{array}
	\]
	are continuous, for all $x_0 \in \mathcal{H}$.
\end{definition}
In the case of $V^2(\lambda)$, this condition is easily obtained whenever $\lambda$ satisfies some weak conditions. For instance,  Aldroubi, Feichtinger and Gröchenig proved that it is enough to have $\lambda$ in the so-called amalgam Wiener space $W^1(C)$, i.e.
\[
	\left( \sum_k \textrm{ess} \sup_{x \in [0,1]} | \lambda(x+k) | \right)  < \infty .
\]
These authors have studied these spaces extensively in theory \cite{Feich90, Feich92} and in the non-uniform sampling framework \cite{Aldroubi98, Aldroubi01, Aldroubi02}. We will not review this theory, and use the different condition that $\lambda$ satisfies $ 0 < A \le G_\lambda \le B < \infty$ and is in the Sobolev space $H^1(\mathbb{R})$:
\[
	H^1(\mathbb{R}) := \left\{  f \in L^2, \| f\|_{H^1} < \infty     \right\} \quad \text{with} \quad \| f \|_{H^1}^2 = \| f \|^2_{L^2} + \| f' \|^2_{L^2} = \frac{1}{2 \pi} \int_\mathbb{R} (1 + \omega^2) | \hat f (\omega) |^2 d \omega .
\]
\begin{lemma} \label{th:rkhs}
	If $\lambda \in H^1(\mathbb{R})$ satisfies $ 0 < A \le G_\lambda \le B < \infty$, then $V^2(\lambda)$ is a RKHS, and we have $V^2(\lambda) \hookrightarrow L^\infty \cap C$.
\end{lemma}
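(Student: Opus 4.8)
The plan is to reduce everything to a single uniform pointwise estimate: that there is a constant $M<\infty$ with $|f(x)|\le M\,\|(c_k)\|_{\ell^2}$ for every $f=\sum_{k}c_k\lambda(\cdot-k)\in V^2(\lambda)$ and every $x\in\mathbb R$. First observe that, by \eqref{eq:ps} together with $0<A\le G_\lambda\le B<\infty$, one has $A^2\|(c_k)\|_{\ell^2}^2\le\|f\|_{L^2}^2\le B^2\|(c_k)\|_{\ell^2}^2$; hence the coefficient map $(c_k)\mapsto\sum_k c_k\lambda(\cdot-k)$ is an injective bounded map with bounded inverse from $\ell^2(\mathbb Z)$ onto $V^2(\lambda)$, so $V^2(\lambda)$ is complete, i.e.\ a closed subspace of $L^2(\mathbb R)$ and thus a Hilbert space in its own right. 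By Cauchy--Schwarz, $|f(x)|\le\|(c_k)\|_{\ell^2}\,\Lambda(x)$ where $\Lambda(x):=\big(\sum_{k\in\mathbb Z}|\lambda(x-k)|^2\big)^{1/2}$, so the task is to bound $\Lambda$ uniformly.

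The crux is the estimate $\sup_{x\in\mathbb R}\Lambda(x)\le M<\infty$, and this is exactly where $\lambda\in H^1(\mathbb R)$ enters. I would use the one–dimensional Sobolev (Morrey) embedding on the unit interval in its translation–invariant form: there is an absolute constant $C_0$ such that, for every $a\in\mathbb R$ and every $g\in H^1(\mathbb R)$,
\[
	\sup_{t\in[a,a+1]}|g(t)|^2\ \le\ C_0\Big(\|g\|_{L^2(a,a+1)}^2+\|g'\|_{L^2(a,a+1)}^2\Big),
\]
the constant being independent of $a$ because $\|\cdot\|_{H^1}$ commutes with translations. This inequality is the routine part: for $s,t\in[a,a+1]$ write $|g(t)|^2=|g(s)|^2+2\int_s^t g\,g'$, average over $s\in[a,a+1]$, and bound $2\int_a^{a+1}|g||g'|\le\|g\|_{L^2(a,a+1)}^2+\|g'\|_{L^2(a,a+1)}^2$. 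Applying it to $\lambda$ on each interval $[m,m+1]$, $m\in\mathbb Z$, and summing gives
\[
	\sum_{m\in\mathbb Z}\ \sup_{t\in[m,m+1]}|\lambda(t)|^2\ \le\ C_0\Big(\|\lambda\|_{L^2}^2+\|\lambda'\|_{L^2}^2\Big)=C_0\,\|\lambda\|_{H^1}^2 .
\]
For fixed $x$, as $k$ ranges over $\mathbb Z$ the points $x-k$ meet each interval $[m,m+1)$ in exactly one point, so $\Lambda(x)^2\le\sum_{m}\sup_{[m,m+1]}|\lambda|^2\le C_0\|\lambda\|_{H^1}^2$; thus $M:=\sqrt{C_0}\,\|\lambda\|_{H^1}$ works.

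Combining this with the lower bound $\|(c_k)\|_{\ell^2}\le A^{-1}\|f\|_{L^2}$ yields $|f(x)|\le (M/A)\,\|f\|_{L^2}$ for all $f\in V^2(\lambda)$ and all $x\in\mathbb R$. In particular every evaluation functional $K_{x_0}$ is continuous on the Hilbert space $V^2(\lambda)$ (and, by Riesz representation, has a representer $k_{x_0}\in V^2(\lambda)$), so $V^2(\lambda)$ is a RKHS, and simultaneously $\|f\|_{L^\infty}\le (M/A)\|f\|_{L^2}$, i.e.\ $V^2(\lambda)\hookrightarrow L^\infty$. For $V^2(\lambda)\hookrightarrow C$, note that the partial sums $S_Nf:=\sum_{|k|\le N}c_k\lambda(\cdot-k)$ are continuous (finite sums of translates of $\lambda$, and $H^1(\mathbb R)\subset C(\mathbb R)$ by the same inequality), while $|f(x)-S_Nf(x)|\le\big(\sum_{|k|>N}|\lambda(x-k)|^2\big)^{1/2}\|(c_k)_{|k|>N}\|_{\ell^2}\le M\,\|(c_k)_{|k|>N}\|_{\ell^2}\to0$ uniformly in $x$; hence $f$ is a uniform limit of continuous functions, so $f\in C(\mathbb R)$ and $V^2(\lambda)\hookrightarrow L^\infty\cap C$. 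The only genuinely non-routine ingredient is the uniform-in-$a$ local Sobolev estimate, and even that is standard once one notes the constant is free of $a$ by translation invariance; everything else is bookkeeping with Cauchy--Schwarz and the frame bounds of Lemma~\ref{le:equiv}. The one subtlety to keep in mind is that membership in $C(\mathbb R)$ of individual elements needs the \emph{uniform} convergence of $S_Nf$, not merely $L^2$-convergence — which is precisely what the bound on $\Lambda$ supplies.
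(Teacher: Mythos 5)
Your proposal is correct and follows essentially the same route as the paper: Cauchy--Schwarz splits $|f(x)|$ into the coefficient $\ell^2$-norm (controlled via the frame bounds of Lemma~\ref{le:equiv}) and the quantity $\sum_{k}|\lambda(x-k)|^2$, which both you and the paper bound by a constant times $\|\lambda\|_{H^1}^2$ using the fundamental theorem of calculus on unit intervals --- the paper anchors at the minimizer $y_k$ of $|\lambda|$ on each interval where you average over $s$, an immaterial difference. The only other (cosmetic) divergence is the continuity step: the paper gives a direct modulus-of-continuity estimate via $\int \lambda'$, while you deduce $f\in C(\mathbb{R})$ from the uniform convergence of the partial sums $S_N f$; both arguments are valid.
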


\begin{proof}
	Let $f = \sum_{k \in \mathbb{Z}} c_k \lambda(\cdot - k)$ be an element of $V^2(\lambda)$. Let $x_0 \in [0,1]$ (the extension to $x_0 \in \mathbb{R}$ is similar), then we have:
	\begin{align*}
		f(x_0)^2 & =  \left( \sum_{k \in \mathbb{Z}} c_k \lambda(x_0 - k) \right)^2 \\
			& \le  \left( \sum_{k \in \mathbb{Z}} | c_k |^2 \right) \cdot \left( \sum_{k \in \mathbb{Z}} | \lambda (x_0 - k) |^2 \right) \quad\text{(Cauchy-Schwarz)}
	\end{align*}

	The first term is $\| f \|_{l^2}^2$, which is equivalent to $\| f \|^2_{L^2}$ according to Lemma (\ref{le:equiv}). The second one is bounded, as the 	following calculation shows.  We recall that $H^1(\mathbb{R}) \hookrightarrow C(\mathbb{R})$ (injection of Sobolev), so that $\lambda$ is continuous, and we can define $y_k = \textrm{arg}\min_{[-k, -k+1]} | \lambda | $. Then, with the Cauchy-Schwarz inequality, and the AM-GM inequality:
	\begin{align*}
		\sum_{k \in \mathbb{Z}} | \lambda (x_0 - k) |^2 & =  \sum_{k \in \mathbb{Z}} \left( \lambda(y_k)^2 + 2 \int_{y_k}^{x_0 - k} \lambda(t) \lambda'(t) dt \right) \\
		& \le \sum_{k \in \mathbb{Z}}  \lambda(y_k)^2  + 2 \sum_{k \in \mathbb{Z}} \left(\int_{y_k}^{x_0 - k} \lambda(t)^2 dt \right)^{1/2} \cdot \left(\int_{y_k}^{x_0 - k} \lambda'(t)^2 dt \right)^{1/2} \\
		& \le  \sum_{k \in \mathbb{Z}} \left( \int_{-k}^{-k+1} \lambda(y_k)^2 dt \right)  + \sum_{k \in \mathbb{Z}} \left(\int_{y_k}^{x_0 - k} \lambda(t)^2 dt \right) +  \sum_{k \in \mathbb{Z}} \left(\int_{y_k}^{x_0 - k} \lambda'(t)^2 dt \right) \\
		& \le \sum_{k \in \mathbb{Z}} \left( \int_{-k}^{-k+1} \lambda(t)^2 dt \right) +  \sum_{k \in \mathbb{Z}} \left(\int_{-k}^{- k+1} \lambda(t)^2 dt \right) +  \sum_{k \in \mathbb{Z}} \left(\int_{-k}^{- k+1} \lambda'(t)^2 dt \right) \\
		& = 2 \| \lambda \|_{L^2}^2 + \| \lambda' \|_{L^2}^2 \le 2 \| \lambda \|_{H^1}^2.
	\end{align*}
	We have therefore proved that $f(x_0) \le c \| \lambda \|_{H^1} \| f \|_{L^2}$, hence $V^2(\lambda)$ is a RKHS. \\
	To prove that $f$ is continuous, we write again~:
	\begin{align*}
		| f(x) - f(y) |^2 & \le \left( \sum_{k \in \mathbb{Z}} | c_k|^2 \right) \cdot \left( \sum_{k \in \mathbb{Z}} | \lambda(x - k) - \lambda (y-k) |^2 \right)\\
			& \le \| f \|_{l^2}^2 \cdot \sum_{k \in \mathbb{Z}}  \left( \int_{x-k}^{y-k} \lambda'(t) dt \right)^2 \\
			& \le \| f \|_{l^2}^2 \cdot |x-y| \cdot \sum_{k \in \mathbb{Z}} \int_{x-k}^{y-k} \lambda'(t)^2 dt
	\end{align*}
	which converges to $0$ whenever $y \to x$, so that $f$ is continuous.
\end{proof}
Note that because $f$ is continuous and bounded, if we normalize $f \in V^2(\lambda)$ so that $\| f \|_\infty = 1$, it is possible to construct a $T$-dense C-TEM for any $T>0$ (take for instance $\mathcal{CT}_{a \cos \omega t}$ with $a > 1$ and $\omega = \pi / T$).

%

In this case, point-wise evaluations are continuous functionals in this space. Therefore, the Riesz theorem states that for all $x \in \mathbb{R}$, there exists a function $K_x \in V^2(\lambda)$ such that:
\[
	\forall f \in V^2(\lambda), \ f(x) =  \bra f \mid K_x \ket.
\]
$K(y,x) = K_x(y)$ is called the reproducing kernel of the space. An expression of $K_x$ is given by

	\begin{equation} \label{eq:K}
		K_x(t) = \sum_{k \in \mathbb{Z}} \lambda(x-k) \tilde \lambda(t -k) = \sum_{k \in \mathbb{Z}} \tilde \lambda(x-k) \lambda(t -k) ,
	\end{equation}
where $\{\tilde \lambda(\cdot -k)\}_k$ is the bi-orthogonal frame of the frame $\{\lambda(\cdot -k)\}_k$:
\[
 	\tilde \lambda (t) := \mathcal{F}^{-1} \left( \frac{\hat \lambda (\omega)}{G^2_{\lambda} (\omega)} \right) \quad \text{so that} \quad \hat{\tilde \lambda}(\omega) = \frac{\hat \lambda (\omega)}{G^2_{\lambda} (\omega)}.
\]
This fact can be easily proved using (\ref{eq:ps}), so that we have the relations:
\[
	\bra \lambda(\cdot - k) \mid \tilde \lambda(\cdot - l) \ket = \delta_{kl} .
\]

This bi-othogonal frame is useful to write explicitly the projector $P_{V^2}$ on $V^2(\lambda)$ (which is closed whenever $0 < A \le G_\lambda \le B < \infty$):
\begin{equation} \label{eq:P}
		\forall f \in L^2, \ (P_{V^2} f) (t) =  \bra f \mid K_t \ket = \ \sum_{k \in \mathbb{Z}} \bra f \mid \tilde \lambda (\cdot - k) \ket \ \lambda(t-k),
\end{equation}
so that the $k^{th}$ coefficient of $(P_{V^2} f)$ in the $\{\lambda(\cdot -k)\}$ frame is $\bra f \mid \tilde \lambda(\cdot -k) \ket$.

\section{Time Encoding Machine on Shift-Invariant Subspaces}

After this review of non-uniform sampling on SISS, we are ready to analyze TEM in detail and give a sufficient conditions for these machines to be invertible.

\subsection{General analysis}

Time encoding machines provide an elegant way to encode amplitude information into the timing domain. Indeed, from the output $\{t_n\}_{n \in \mathbb{Z}}$ of a known Crossing-TEM, we can reconstruct the test functions $\Phi_{n}$, hence recover $f(t_n) = \Phi_{n}(t_n)$. Therefore, we can consider TEMs as a special method of non-uniform sampling, where the locations of the samples depend on the input function. In order to construct an approximation of the signal, we introduce the operator $V_{\{t_n\}}$ (denoted $V$) defined as follows. Let
\[
	s_n = \frac{t_{n-1} + t_n}{2}
\]
so that $\mathds{1}_{[s_n, s_{n+1}[}$ is the Voronoï region of $t_n$. Then $V f$ is defined by:
\begin{equation} \label{def:V}
	V f (t) = \sum_{n=\infty}^{\infty}  f(t_n) \cdot \mathds{1}_{[s_n, s_{n+1}[} (t).
\end{equation}
$V f$ is a piecewise constant function: $Vf (t) = f(t_n)$ for $t \in [s_n, s_{n+1}[$. The key fact is that, for a Crossing-TEM $\mathcal{CT}$, we can compute $Vf$ from the output $\{t_n\} = \mathcal{CT} f$. Hence, if we can invert $V$, we can recover the signal $f(t)$. This is the case whenever $\| Id - V \| < 1$ on a well-defined space. $\| f - V f \|_{L^2}$ is an integral of the difference between a function $f(t)$ and its approximation by a piece-wise linear function $Vf(t)$. According to the Riemann sums, we have the intuition that, provided that $t_n$ are dense enough, this difference should be small. The following calculations have been used by Gröchenig in \cite{Groch92}, then by Lazar and Tóth in \cite{Lazar04}, or Chen, Han and Jia in \cite{Chen04, Chen05}. We review them for completeness. We have:
\begin{eqnarray} \label{eq:1}
	\| f - V f \|^2_{L^2} & = & \int_{-\infty}^{\infty} | f(t) - Vf(t)|^2 dt = \sum_{n \in \mathbb{Z}} \int_{s_n}^{s_{n+1}} | f(t) - f(t_n)|^2 dt \nonumber \\
	 	& = & \sum_{n \in \mathbb{Z}} \int_{s_n}^{t_n} | f(t) - f(t_n)|^2 dt + \sum_{n \in \mathbb{Z}} \int_{t_n}^{s_{n+1}} | f(t) - f(t_n)|^2 dt .
\end{eqnarray}
We then use Wirtinger's inequality:
\begin{lemma} [Wirtinger's inequality]
	Let $f, f' \in L^2(a,b)$ with $f(a) = 0$ or $f(b)=0$, then:
	\[
		\int_a^b f(u)^2 du \le \frac{4}{\pi^2} (a-b)^2 \int_a^b f'(u)^2 du .
	\]
\end{lemma}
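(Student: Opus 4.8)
The plan is to prove the sharp inequality by expanding $f$ in the orthogonal trigonometric system adapted to a single boundary condition. First I would reduce to a normalized situation: the case $f(b)=0$ follows from the case $f(a)=0$ via the reflection $u\mapsto a+b-u$, so I may assume $f(a)=0$; then setting $L:=b-a$ and $F(x):=f(a+x)$ gives $F,F'\in L^2(0,L)$ with $F(0)=0$, and the claim becomes $\int_0^L F(x)^2\,dx\le \frac{4L^2}{\pi^2}\int_0^L F'(x)^2\,dx$.

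Next I would bring in the eigenfunctions of $-d^2/dx^2$ on $(0,L)$ with a Dirichlet condition at $0$ and a Neumann condition at $L$, namely the frequencies $\mu_k:=\frac{(2k+1)\pi}{2L}$ and the functions $e_k(x):=\sin(\mu_k x)$ for $k\ge 0$. Two standard facts from Sturm--Liouville/Fourier theory are used: $\{e_k\}_{k\ge0}$ is a complete orthogonal system of $L^2(0,L)$ with $\|e_k\|_{L^2(0,L)}^2=L/2$, and likewise $\{\cos(\mu_k x)\}_{k\ge0}$ is a complete orthogonal system with the same norms. Writing $F=\sum_{k\ge0}c_k e_k$, Parseval gives $\int_0^L F^2=\tfrac L2\sum_{k\ge0}c_k^2$.

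Then I would compute the Fourier coefficients of $F'$ against the cosine system by integration by parts, valid since the $H^1$-representative of $F$ is absolutely continuous: $\int_0^L F'(x)\cos(\mu_k x)\,dx=\big[F(x)\cos(\mu_k x)\big]_0^L+\mu_k\int_0^L F(x)\sin(\mu_k x)\,dx$. The boundary term vanishes at $x=0$ because $F(0)=0$, and at $x=L$ because $\cos(\mu_k L)=\cos\!\big((2k+1)\tfrac\pi2\big)=0$; this double cancellation is exactly what the hypothesis ``$f(a)=0$ or $f(b)=0$'' buys, and is why the optimal constant here is $4/\pi^2$ rather than the $1/\pi^2$ of the two-endpoint version. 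Hence the $k$-th cosine coefficient of $F'$ is $\mu_k c_k$, and Bessel's inequality yields $\int_0^L F'^2\ge \tfrac L2\sum_{k\ge0}\mu_k^2 c_k^2\ge \mu_0^2\cdot\tfrac L2\sum_{k\ge0}c_k^2=\mu_0^2\int_0^L F^2$. Since $\mu_0=\pi/(2L)$, this rearranges to $\int_0^L F^2\le \frac{4L^2}{\pi^2}\int_0^L F'^2$, and undoing the substitution gives the lemma.

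There is no deep obstacle here; the only points needing care are the completeness and orthogonality of the mixed Dirichlet--Neumann trigonometric system (a textbook fact, provable e.g. by an odd/even extension to a full period) and the justification of the integration by parts for merely $H^1$ functions. I would also remark that equality is approached by $f(u)=\sin\!\big(\tfrac{\pi(u-a)}{2(b-a)}\big)$, which shows the constant $4/\pi^2$ is best possible, although the statement only requires the inequality.
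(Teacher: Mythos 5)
Your proof is correct, but it follows a different route from the paper. You expand $F$ in the mixed Dirichlet--Neumann eigenbasis $\sin\bigl(\tfrac{(2k+1)\pi x}{2L}\bigr)$, pair $F'$ with the corresponding cosine system by integration by parts (the boundary term dying once at $x=0$ from $F(0)=0$ and once at $x=L$ from $\cos\bigl(\tfrac{(2k+1)\pi}{2}\bigr)=0$), and finish with Parseval and Bessel; this is sound, and as you note the only facts needing citation are the completeness of the sine system (the cosine system only needs orthogonality, since Bessel suffices) and the absolute continuity of the $H^1$ representative. The paper instead normalizes to the interval $(0,\pi/2)$ by a change of variables and uses the pointwise identity $f'^2-f^2=\bigl[f'-f\cot u\bigr]^2+\bigl(f^2\cot u\bigr)'$, integrating and discarding the square; the total-derivative term vanishes because $f(0)=0$ (so $f^2\cot u\to 0$ at $0$, using $|f(u)|\le\sqrt{u}\,\|f'\|_{L^2}$) and $\cot(\pi/2)=0$. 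The paper's argument is shorter and avoids any appeal to completeness of an eigensystem, at the cost of pulling the $\cot$ trick out of a hat and of a small unstated limit argument at the endpoint; your spectral argument is longer but more systematic, makes the role of the single boundary condition (and hence the constant $4/\pi^2$ rather than $1/\pi^2$) transparent, and yields sharpness via the extremal $f(u)=\sin\bigl(\tfrac{\pi(u-a)}{2(b-a)}\bigr)$ for free.
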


\begin{proof}
	We calculate, for $f \in L^2(0, \pi/2)$ with $f(0) = 0$,
	\[
		\int_a^b (f'(u)^2 - f(u)^2) du \ = \int_0^{\pi/2} [ f'(u) - f(u)\cot(u)]^2du - \underbrace{\int_0^{\pi/2} (f^2(u) \cot(u))' du}_0 \ge 0 ,
	\]
	and we obtain the general inequality with a change of variables.
\end{proof}
We suppose now that $f' \in L^2$, so that using Wirtinger's inequality in (\ref{eq:1}) leads to

\begin{equation} \label{eq:2}
	\| f -Vf\|^2_{L^2} \le \sum_{n \in \mathbb{Z}} \frac{4}{\pi^2} (t_n - s_n)^2\int_{s_n}^{t_n} |f'(t)|^2 dt + \frac{4}{\pi^2} (s_{n+1} - t_n)^2 \int_{t_n}^{s_{n+1}} | f'(t)|^2 dt . 
\end{equation} 
We can see that this error depends on the differences $(t_n - s_n)$, or $(t_{n+1} - t_{n})$. If we suppose that $(t_{n+1} - t_n) < T$, so that $(t_n - s_n) < T/2$ and $(s_{n+1} - t_n)<T/2$, equation (\ref{eq:2}) becomes: 

\begin{equation}\label{eq:3}
	\| f - Vf \|_{L^2}^2 \ \le \ \frac{T^2}{\pi^2} \sum_{n \in \mathbb{Z}} \int_{s_n}^{t_n} | f'(t)|^2dt + \int_{t_n}^{s_{n+1}} | f'(t)|^2dt \ \le \ \frac{T^2}{\pi^2} \| f' \|^2_{L^2} .
\end{equation}
Finally, if $f \in V^2(\lambda)$, and $\lambda' \in L^2$, we can use a final inequality:

\begin{lemma} [Chen, Han, Jia \cite{Chen05}] \label{th:chen}
	Let $\lambda \in H^1(\mathbb{R})$, then,
	\[
		\forall f \in V^2(\lambda), \ \| f' \|_{L^2} \le \left( \ \sup_{\omega \in [0, 2\pi[} \frac{G_{\lambda'}(\omega)}{G_\lambda(\omega)} \right) \ \| f\|_{L^2} .
	\]
\end{lemma}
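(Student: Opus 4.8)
The plan is to work entirely on the Fourier side, where the $V^2(\lambda)$ inner product has the explicit form~(\ref{eq:ps}). Write $f = \sum_k c_k \lambda(\cdot - k) \in V^2(\lambda)$; since $\lambda \in H^1$, differentiation is legitimate term by term and $f' = \sum_k c_k \lambda'(\cdot - k)$, so $f' \in V^2(\lambda')$ with the \emph{same} coefficient sequence $(c_k)$. Applying~(\ref{eq:ps}) to $f$ in $V^2(\lambda)$ and to $f'$ in $V^2(\lambda')$ gives
\[
	\| f \|_{L^2}^2 = \frac{1}{2\pi}\int_0^{2\pi} |\hat c(\omega)|^2 G_\lambda(\omega)^2\, d\omega,
	\qquad
	\| f' \|_{L^2}^2 = \frac{1}{2\pi}\int_0^{2\pi} |\hat c(\omega)|^2 G_{\lambda'}(\omega)^2\, d\omega.
\]

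From here the conclusion is a one-line pointwise domination: for almost every $\omega \in [0,2\pi[$,
\[
	G_{\lambda'}(\omega)^2 = \left(\frac{G_{\lambda'}(\omega)}{G_\lambda(\omega)}\right)^2 G_\lambda(\omega)^2
	\le \left(\sup_{\omega' \in [0,2\pi[} \frac{G_{\lambda'}(\omega')}{G_\lambda(\omega')}\right)^2 G_\lambda(\omega)^2,
\]
and integrating $|\hat c(\omega)|^2$ times this inequality over $[0,2\pi[$ turns the two displayed identities into $\| f' \|_{L^2}^2 \le \big(\sup_\omega G_{\lambda'}/G_\lambda\big)^2 \| f \|_{L^2}^2$, which is the claim after taking square roots.

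The steps, in order: (i) justify that $f' = \sum_k c_k \lambda'(\cdot-k)$ belongs to $V^2(\lambda')$ with coefficients $(c_k)$ — this uses $\lambda \in H^1$ so $\lambda' \in L^2$, plus the fact that $\ell^2$ membership of $(c_k)$ is unchanged; (ii) invoke~(\ref{eq:ps}) twice to get the two Fourier-side norm expressions; (iii) perform the pointwise bound on $G_{\lambda'}^2$ in terms of $G_\lambda^2$; (iv) integrate and take square roots.

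The only genuine subtlety — and hence the main obstacle — is step (i) together with the implicit hypothesis that $G_{\lambda'}/G_\lambda$ is a bounded (hence essentially bounded) function, i.e. that the supremum on the right-hand side is finite; if $G_\lambda$ has no positive lower bound this ratio can blow up and the statement is vacuous, so one should note that it is understood to hold when the supremum is finite (which is automatic under the standing assumption $0 < A \le G_\lambda \le B$ used elsewhere, since then $G_{\lambda'}/G_\lambda \le G_{\lambda'}/A$ and $G_{\lambda'} \in L^\infty$ because $\lambda' \in L^2$ forces $\sum_k |\hat{\lambda'}(\omega+2k\pi)|^2 \in L^1[0,2\pi]$, though pointwise boundedness of $G_{\lambda'}$ itself needs a touch more, e.g. $\lambda' \in W^1(C)$ or direct inspection). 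Everything else is routine manipulation of~(\ref{eq:ps}), so I would keep the write-up to essentially the four displayed lines above plus a sentence addressing the finiteness of the constant.
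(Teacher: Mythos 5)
Your argument is exactly the paper's: write $f'=\sum_k c_k\lambda'(\cdot-k)$, express both $\|f\|_{L^2}^2$ and $\|f'\|_{L^2}^2$ via (\ref{eq:ps}) as $\frac{1}{2\pi}\int_0^{2\pi}|\hat c(\omega)|^2 G_\lambda^2\,d\omega$ and $\frac{1}{2\pi}\int_0^{2\pi}|\hat c(\omega)|^2 G_{\lambda'}^2\,d\omega$, and bound the latter by pulling out $\sup_\omega G_{\lambda'}/G_\lambda$. Your extra remarks on term-by-term differentiation and finiteness of the constant are sensible but do not change the route; the proposal is correct and matches the paper's proof.
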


\begin{proof}
	Let $f(t) = \sum_k c_k \lambda(t-k)$, so that $f'(t) = \sum_k c_k \lambda'(t-k)$. We use the formula of the scalar product in $V^2(\lambda)$ described in (\ref{eq:ps}): we can write $\hat c(\omega) = \sum_k c_k \cdot \exp(-ik\omega)$, so that:
	\[
		\| f\|_{L^2}^2 = \frac{1}{2 \pi} \int_0^{2 \pi} | \hat c(\omega)|^2 \cdot G_\lambda(\omega)^2 d \omega .
	\]
	In a similar way, we have:
	\[
		\| f'\|_{L^2}^2 = \frac{1}{2 \pi} \int_0^{2 \pi} | \hat c(\omega)|^2 \cdot G_{\lambda'}(\omega)^2 d \omega \le \frac{1}{2 \pi} \int_0^{2 \pi} | \hat c(\omega)|^2 \cdot G_{\lambda}(\omega)^2 \cdot \left( \frac{G_{\lambda'}(\omega)}{G_{\lambda}(\omega)} \right)^2 d \omega
		\le \left( \ \sup_{\omega \in [0, 2\pi[} \frac{G_{\lambda'}(\omega)}{G_\lambda(\omega)} \right)^2 \| f \|_{L^2}^2 .
	\]

\end{proof}
Plugging this into (\ref{eq:3}) leads to:
\begin{equation} \label{eq:4}
	\| f - Vf \|_{L^2}^2 \le \frac{T^2}{\pi^2} \| f '\|^2 \le \frac{T^2}{\pi^2} \cdot \left( \sup_{\omega \in [0, 2\pi[} \frac{G_{\lambda'}(\omega)}{G_\lambda(\omega)} \right)^2 \cdot \| f \|_{L^2}^2.
\end{equation}
We are now able to state our main theorem.


\subsection{Sufficient conditions for invertibility of TEMs}

We define
\[
	E = \left\{ \lambda \in H^1(\mathbb{R}), \quad 0 < A \le G_{\lambda}(\omega) \le B < \infty \quad \text{and} \quad G_{\lambda'} ( \omega) < \infty \right\}
\]
as a shorthand to have $\lambda$ satisfy the condition of Lemma \ref{th:rkhs} so that $V^2(\lambda)$ is a RKHS, with the extra condition:
\[
	\sup_\omega \frac{G_{\lambda'}(\omega)}{G_\lambda(\omega)} < \infty .
\]
In the following, we will denote $P := P_{V^2}$ the projection on $V^2(\lambda)$ as defined in (\ref{eq:P}).

\begin{theorem} \label{th:main}
	Let $\lambda \in E$ and
	\[
		\tau = \pi \cdot \inf_{\omega} \frac{G_{\lambda}(\omega)}{G_{\lambda'}(\omega)} > 0,
	\]
	then, for all $T < \tau$, every Crossing-TEM $\mathcal{CT}$which is $T$-dense is invertible. Moreover, if $\{t_n\} = \mathcal{CT}f$, an iterative reconstruction is given by the sequence:
	\begin{equation} \label{eq:reconstruction}
		\begin{array}{lll}
			f_1 & = & PV f_0 \\
			f_{k+1} & = & f_1 + (Id - PV) f_k
		\end{array} ,
	\end{equation}
	where $P$ is the orthogonal projector on $V^2(\lambda)$ and $V = V_{\{t_n\}}$ is defined in (\ref{def:V}). We have:
	\[
		\| f - f_k \|_{L^2} \le \Big( \frac{T}{\tau} \Big)^k \| f \|_{L^2} \xrightarrow[k \to 0]{} 0 .
	\]
\end{theorem}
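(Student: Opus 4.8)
The plan is to recognize the reconstruction \eqref{eq:reconstruction} as a Neumann series for the operator $PV$ restricted to $V^2(\lambda)$, in the spirit of Gröchenig's iterative method \cite{Groch92}. First I would record the elementary identity $T/\tau = \tfrac{T}{\pi}\,\sup_{\omega}\tfrac{G_{\lambda'}(\omega)}{G_\lambda(\omega)}$, which follows from $\tau = \pi\inf_\omega G_\lambda(\omega)/G_{\lambda'}(\omega)$ and is finite and strictly positive precisely because $\lambda\in E$. Hence the hypothesis $T<\tau$ is the same as $T/\tau<1$, and the error bound \eqref{eq:4} can be rewritten as $\|f - Vf\|_{L^2} \le (T/\tau)\,\|f\|_{L^2}$ for every $f\in V^2(\lambda)$. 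Note this estimate is available exactly because $\lambda\in E$ places us in the setting of Lemmas \ref{th:rkhs} and \ref{th:chen}: $f$ is continuous and bounded, so $f(t_n)$ and $Vf$ of \eqref{def:V} are well defined, and $f'\in L^2$, so Wirtinger's inequality applies.

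Next, for $f\in V^2(\lambda)$ I would use $Pf = f$ to write $f - PVf = P(f - Vf)$. Since $P$ is an orthogonal projection it is norm-nonincreasing, so $\|f - PVf\|_{L^2} \le \|f - Vf\|_{L^2} \le (T/\tau)\|f\|_{L^2}$. This says that $Id - PV$, viewed as an operator on the (closed, hence Banach) subspace $V^2(\lambda)$, has operator norm at most $T/\tau < 1$. Consequently $PV$ is invertible on $V^2(\lambda)$ with $(PV|_{V^2})^{-1} = \sum_{j\ge 0}(Id - PV)^j$. Invertibility of the C-TEM follows: from the output $\{t_n\} = \mathcal{CT}f$ the decoder reconstructs the test functions $\Phi_n$, hence the amplitudes $f(t_n) = \Phi_n(t_n)$, hence the Voronoï points $s_n$ and the function $Vf$, hence $PVf$ via \eqref{eq:P}; applying $(PV|_{V^2})^{-1}$ then returns $f$. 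This composition is the required time decoding machine $\mathcal D$ with $\mathcal D\circ\mathcal{CT} = Id$.

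For the convergence rate I would set $g := PVf$ — which, as just noted, is computable from the TEM output, and is what the statement calls "$PVf_0$" — and take $f_1 = g$. An induction then gives $f - f_k = (Id - PV)^k f$. Indeed $f - f_1 = f - PVf = (Id - PV)f$, and
\[
 f - f_{k+1} = (f - g) - (Id - PV)f_k = (Id - PV)f - (Id - PV)f_k = (Id - PV)(f - f_k).
\]
Since $f\in V^2(\lambda)$ and $Id - PV$ maps $V^2(\lambda)$ into itself with norm $\le T/\tau$, every iterate stays in $V^2(\lambda)$ and $\|f - f_k\|_{L^2} \le (T/\tau)^k\|f\|_{L^2} \to 0$, which is the claimed bound.

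The remaining subtleties are bookkeeping rather than conceptual, and are where I would be careful: (i) confirming $Vf\in L^2$, so that $PVf$ is defined — this is immediate from $f\in L^2$ together with $f - Vf\in L^2$, which \eqref{eq:4} already supplies; (ii) keeping the contraction on the subspace $V^2(\lambda)$ and not on all of $L^2$, since on $L^2$ the map $Id - PV$ need not be a contraction — this is precisely why the identity $f = Pf$ is essential, and the projected error $P(f-Vf)$ is what drives the iteration; and (iii) the point that $V = V_{\{t_n\}}$ depends on the unknown $f$ through the firing times, so "inverting $V$" means inverting the concrete operator $PV$ attached to the observed sequence $\{t_n\}$ — which is legitimate because the decoder sees that sequence. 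No step should require more than the lemmas already established.
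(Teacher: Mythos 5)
Your proposal is correct and follows essentially the same route as the paper: the bound $\|f - PVf\|_{L^2} = \|P(f-Vf)\|_{L^2} \le (T/\tau)\|f\|_{L^2}$ from \eqref{eq:4}, the contraction $\|Id - PV\|_{V^2(\lambda)} \le T/\tau < 1$, the induction $f - f_k = (Id-PV)^k f$, and invertibility via computability of $f(t_n)$ and $Vf$ from the output. Your explicit use of $Pf = f$ and the norm-nonincreasing property of $P$ is in fact slightly cleaner than the paper's corresponding line, and your reading of ``$PVf_0$'' as $PVf$ is the intended one.
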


\begin{proof}
	With the above notations, using (\ref{eq:4}), we have:
	\[
		\| f - PV f \|_{L^2} = \| P(f - PVf) \|_{L^2} = \| f - Vf \|_{L^2} \le \frac{T}{\tau} \| f \|_{L^2}, \quad \forall f \in V^2(\lambda).
	\]
	The operator $(PV): V^2(\lambda) \to V^2(\lambda)$ verifies $\| Id - PV \| \le T/\tau < 1$ whenever $T < \tau$, hence is invertible. We then show that $f - f_k = ( Id - PV)^k f$ by induction. We have indeed:
	\[
		 f - f_{k+1}= f - f_1 - (Id - PV)f_k = (Id - PV) f - (Id - PV) f_k = (Id - PV)(f - f_k)= (Id - PV)^{k+1} f ,
	\]
	so that $\|f - f_k\|_{L^2} \le (T/\tau) \| f \|_{L^2}$. Finally, because $f_1$, hence all the $f_k$ can be calculated from the output $\{t_n\}$ of the Crossing-TEM, this TEM is invertible.
\end{proof}

In the case $\lambda = sinc_{\pi}$ for instance, we have $\hat \lambda = \mathds{1}_{[-\pi, \pi[}$, so that $G_{\lambda} = 1$.  Then, we have $\hat {\lambda'}(\omega) = \omega \mathds{1}_{[-\pi, \pi[}$, hence $(\inf G_\lambda / G_{\lambda'}) = \pi$. Our theorem states in this case that if $T < 1$, then a $T$-dense TEM on band-limited functions is invertible. We find the Nyquist bound, which is (almost) the optimal bound we could hope for, according to the Beurling-Landau theorem. The reader may find a complete study of the band-limited case in \cite{Beutler66, Landau67} or in the very good book \cite{Young80}. In the general case, this bound is not optimal. Chen, Han and Jia discussed this bound in \cite{Chen04}. Chen, Itoh and Shiki studied the case of wavelet subspaces in \cite{Chen98}. Aldroubi and Feichtinger have calculated an optimal bound for cardinal spline spaces in \cite{Aldroubi99}, which is better than the one given with this theorem. Gröchenig has also studied the case where $\lambda \in H^2(\mathbb{R})$ in \cite{Groch92}, and derived another bound using a different Wirtinger inequality. He also noticed that one can speed up the convergence using piecewise linear functions instead of piecewise constant functions for the operator $V$ (we get a convergence in $O(\beta^n)$ instead of $O(\alpha^n)$, with $\beta < \alpha$).

The operator $(PV)$ consists in calculating a piecewise constant function approximating $f \in V^2(\lambda)$ on $\{t_n\}$, and projecting the result back on $V^2(\lambda)$. This theorem states that the operator $(PV)$ is invertible whenever $\{t_n\}$ is dense enough. It links therefore density and invertibility of TEMs, and provides an iterative method to reconstruct the signal with an exponential speed of convergence. We could also have calculated the inverse directly: let $f \in V^2(\lambda)$ be represented by the infinite column vector $\textbf{c} = (\ldots, c_k, c_{k+1}, \ldots)^T$ of its coefficients: $f(t) = \sum_k c_k \lambda(t-k)$, then

\begin{lemma} \label{th:reconstruction}
	Let $M$ and $A$ be infinite matrices over $\mathbb{Z} \times \mathbb{Z}$ with coefficients $M_{jk} = \lambda(t_j - k)$ and $A_{ki} = \int_{s_i}^{s_{i+1}} \tilde \lambda(u-k)du$. If $f(t) = \sum_k c_k \lambda(t-k)$ and $g(t) = PV f = \sum_k d_k \lambda(t-k)$, then $\textbf{d} = AM \textbf{c}$.
\end{lemma}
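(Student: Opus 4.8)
The plan is to unwind the two operators $P$ and $V$ explicitly and simply read off the induced map on coefficient sequences. The first point to establish is that $Vf$ is a genuine $L^2$ function, so that $PVf$ and its coefficients make sense: since $f \in V^2(\lambda)$ and $\mathcal{CT}$ is $T$-dense with $T<\tau$, the bound (\ref{eq:4}) gives $\|Vf\|_{L^2}\le \|f\|_{L^2}+\|f-Vf\|_{L^2}\le (1+T/\tau)\|f\|_{L^2}<\infty$. Hence $g=PVf$ is a well-defined element of $V^2(\lambda)$ and, by Lemma \ref{le:equiv}, its coefficient sequence $\mathbf d$ lies in $l^2$.

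Next I would use the explicit projector formula (\ref{eq:P}): the $k$-th coefficient of $Pg$ in the $\{\lambda(\cdot-k)\}$ frame is $\bra g \mid \tilde\lambda(\cdot-k)\ket$. Applying this with $g=Vf$ and inserting the definition (\ref{def:V}) yields
\[
 d_k=\bra Vf \mid \tilde\lambda(\cdot-k)\ket=\int_{\mathbb R}\Big(\sum_{n\in\mathbb Z} f(t_n)\,\mathds{1}_{[s_n,s_{n+1}[}(u)\Big)\tilde\lambda(u-k)\,du .
\]
The interchange of sum and integral is legitimate because $\sum_n |f(t_n)|\int_{s_n}^{s_{n+1}}|\tilde\lambda(u-k)|\,du=\int_{\mathbb R}|Vf(u)|\,|\tilde\lambda(u-k)|\,du\le \|Vf\|_{L^2}\|\tilde\lambda\|_{L^2}<\infty$, where $\tilde\lambda\in L^2$ since $\hat{\tilde\lambda}=\hat\lambda/G_\lambda^2$ and $G_\lambda\ge A>0$. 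Therefore $d_k=\sum_n\big(\int_{s_n}^{s_{n+1}}\tilde\lambda(u-k)\,du\big)f(t_n)=\sum_n A_{kn}f(t_n)$.

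Then I would expand each sample through the generator expansion of $f$: because $f\in V^2(\lambda)$ is continuous (Lemma \ref{th:rkhs}) and the defining series converges pointwise, $f(t_n)=\sum_j c_j\lambda(t_n-j)=\sum_j M_{nj}c_j=(M\mathbf c)_n$. Substituting into the previous line gives $d_k=\sum_n A_{kn}(M\mathbf c)_n=(AM\mathbf c)_k$, i.e.\ $\mathbf d=AM\mathbf c$, which is the claim.

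The computation itself is a two-line rearrangement, so the only real work — and the main thing to be careful about — is the bookkeeping of the infinite sums: checking that $Vf\in L^2$ (so $PVf$ and $\mathbf d$ are defined), and that the Fubini-type interchanges above are absolutely convergent, for which the RKHS bound of Lemma \ref{th:rkhs}, the $L^2$-membership of $\tilde\lambda$, and the frame equivalence of Lemma \ref{le:equiv} are exactly what is needed. One should also note that associativity $(AM)\mathbf c=A(M\mathbf c)$ is not automatic for infinite matrices, so the identity $\mathbf d=AM\mathbf c$ is to be read as the composition of the two coefficient maps $\mathbf c\mapsto M\mathbf c=\{f(t_n)\}$ and $\{f(t_n)\}\mapsto A\{f(t_n)\}=\mathbf d$, precisely as derived above.
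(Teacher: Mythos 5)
Your proof is correct and follows essentially the same route as the paper: expand $Vf$ via (\ref{def:V}), apply the projector formula (\ref{eq:P}) to read off $d_k = \sum_n A_{kn} f(t_n)$, and expand $f(t_n) = (M\mathbf{c})_n$. The extra bookkeeping you supply (showing $Vf \in L^2$, $\tilde\lambda \in L^2$, and the Fubini interchange) is a sound refinement of the paper's more formal computation, but not a different argument.
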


\begin{proof}
	We first have:
	\[
		f_1(t) := (Vf)(t) = \sum_{j \in \mathbb{Z}}  f(t_j) \cdot \mathds{1}_{V_j} (t)
	\]
	and according to (\ref{eq:P}),
	\[
		g(t) = (PV f) (t) = (P f_1) (t) = \sum_{k \in \mathbb{Z}} \sum_{j \in \mathbb{Z}} f(t_j) \cdot \bra \mathds{1}_{V_j} \mid \tilde \lambda(\cdot -k) \ket \lambda(t-k) .
	\]
	We check that $\bra \mathds{1}_{V_j} \mid \tilde \lambda(\cdot -k) \ket = \int_{s_j}^{s_{j+1}} \tilde \lambda(u-k)du = A_{kj}$ and that $f(t_j) = \sum_k c_k \lambda(t_j -k) = (M \textbf{c})_j$. Altogether,
	\[
		d_k = \sum_j f(t_j) \cdot \bra \mathds{1}_{V_j} \mid \tilde \lambda(\cdot -k) \ket = \sum_j A_{kj} (M\textbf{c})_j 
	\]
	which can also be written $\textbf{d} = AM\textbf{c}$.
\end{proof}

The question of whether we can recover a signal $f$ from its samples $f(t_n) = \bra f \mid K_n \ket$ has a classic answer in frame theory: we can do so whenever there exists $0 < A \le B < \infty$ such that for all $f \in V^2(\lambda)$, $A \| f \|^2_{L^2} \le \sum f(t_n)^2 \le B \| f \|_{L^2}$, i.e. $\{K_{t_n}\}$ is a frame. In particular, if there exists $0 < \delta \le T$ such that $\delta \le t_{n+1} - t_n \le T$, then this is the case (see \cite{Aldroubi01}), and we can recover the signal in a stable way with:
\[
	f(t) = \sum_{k \in \mathbb{Z}} f(t_n) \tilde K_{t_n}(t)
\]
where $\{\tilde K_{t_n}\}$ is the bi-orthogonal frame of $\{K_{t_n}\}$. In our case, we are not using the fact that $t_{n+1} - t_n \ge \delta$, but the weaker form~: $\lim_{n \to \pm \infty} t_n = \pm \infty$. This may lead to an unstable reconstruction, but we can always deduce a stable one by erasing some extra data, according to this result. Instead of deleting data, we have chosen the more elegant way of weighting the samples (see \cite{Groch92}). Indeed, we can show that:
\[
	A \| f \|^2_{L^2} \le \sum_{n \in \mathbb{Z}} w_n f(t_n)^2 = \| V f \|^2_{L^2}  \le B \| f \|^2_{L^2} \quad \text{with} \quad w_n = s_{n+1} - s_n .
\]

\subsection{Integrate-and-Fire TEM as the adjoint of crossing TEM}
\label{ssec:adjoint}

We show in this section that IF-TEMs may be seen as the quasi-adjoint case of C-TEMs. Indeed, with an IF-TEM $\mathcal{IT}$, we may now recover $\Phi_n(t_n) = \int_{t_n}^{t_{n+1}} f(u) du$ from $\{t_n\}_{n \in \mathbb{Z}} = \mathcal{IT} f$. We introduce the operator $Z = Z_{\{t_n\}}$ defined by
\[
	(Z f)(t) := \sum_{n \in \mathbb{Z}} \int_{t_n}^{t_{n+1}} f(u) du \cdot K_{s_{n+1}}(t).
\]
We do not comment on the spaces on which $Z$ acts, for we are mainly interested into the operator $(PZ)$ which will have a meaning. Note for instance that $P : L^2 \to V^2(\lambda)$ can be defined on a larger space than $L^2$, namely $V^2(\lambda)'$, the dual space of $V^2(\lambda)$.  Let also introduce the operator
\[
		(V'f)(t) := \sum_{n \in \mathbb{Z}} f(s_{n+1}) \cdot \mathds{1}_{[t_n, t_{n+1}[}(t).
\]
\begin{lemma}
	If $\{t_n\}$ is T-dense, with $\lim_{n \to \pm \infty} t_n = \pm \infty$, then 
	\begin{itemize}
		\item $V'$ is well-defined and linear continuous as an operator from $H^1$ to $L^2$.
		\item If $\lambda \in E$, $(P V')$ is an operator from $V^2(\lambda)$ to $V^2(\lambda)$, and we have:
		\[
			\forall f, g \in V^2(\lambda), \quad \bra f \mid (PV')g \ket = \bra (PZ)f \mid g \ket
		\]
	\end{itemize}
\end{lemma}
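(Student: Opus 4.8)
The plan is to verify the two assertions separately.

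For the first, I would note that $H^1(\mathbb{R})\hookrightarrow C(\mathbb{R})$, so $f(s_{n+1})$ is well defined for every $f\in H^1$, and that since $\lim_{n\to\pm\infty}t_n=\pm\infty$ the half-open intervals $[t_n,t_{n+1})$ partition $\mathbb{R}$; hence $(V'f)(t)$ is, at each $t$, a single term of the defining sum, and $\|V'f\|_{L^2}^2=\sum_n (t_{n+1}-t_n)\,|f(s_{n+1})|^2$. The crux is a pointwise sampling estimate on an interval: writing, for $x\in[a,b]$, $f(x)=\tfrac{1}{b-a}\int_a^b f(u)\,du+\tfrac{1}{b-a}\int_a^b\!\big(\int_u^x f'(v)\,dv\big)\,du$ and applying Cauchy--Schwarz twice gives $(b-a)\,|f(x)|^2\le 2\int_a^b|f(u)|^2\,du+2(b-a)^2\int_a^b|f'(v)|^2\,dv$. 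Applying this with $[a,b]=[t_n,t_{n+1}]$, $x=s_{n+1}$, using $T$-density ($t_{n+1}-t_n\le T$), and summing over $n$ yields $\|V'f\|_{L^2}^2\le 2\|f\|_{L^2}^2+2T^2\|f'\|_{L^2}^2\le 2\max(1,T^2)\,\|f\|_{H^1}^2$, which is the desired continuity of $V'\colon H^1\to L^2$ (linearity being clear).

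For the second assertion I would first record that $V^2(\lambda)\hookrightarrow H^1$ when $\lambda\in E$: by Lemma~\ref{th:chen}, $\|f'\|_{L^2}\le\big(\sup_\omega G_{\lambda'}/G_\lambda\big)\|f\|_{L^2}$, so $\|f\|_{H^1}\le C\|f\|_{L^2}$ on $V^2(\lambda)$. Consequently, for $g\in V^2(\lambda)$ we have $V'g\in L^2$ and $PV'g\in V^2(\lambda)$, so $PV'\colon V^2(\lambda)\to V^2(\lambda)$ is a well-defined bounded operator. Now fix $f,g\in V^2(\lambda)$. Since $P$ is the orthogonal projector onto $V^2(\lambda)$ and $f=Pf$, $\bra f\mid PV'g\ket=\bra f\mid V'g\ket_{L^2}=\int_{\mathbb{R}}f(t)\,(V'g)(t)\,dt$; the partial sums of $\sum_n g(s_{n+1})\mathds{1}_{[t_n,t_{n+1})}$ are dominated in modulus by $|V'g|\in L^2$, so multiplying by $f\in L^2$ and using dominated convergence gives $\bra f\mid PV'g\ket=\sum_n g(s_{n+1})\int_{t_n}^{t_{n+1}}f(t)\,dt$, the series being absolutely convergent (Cauchy--Schwarz bounds it by $\|f\|_{L^2}\|V'g\|_{L^2}$). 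Finally I would substitute the reproducing-kernel identity $g(s_{n+1})=\bra K_{s_{n+1}}\mid g\ket$ (all quantities real) and pull $g$ out of the absolutely convergent sum to obtain $\bra f\mid PV'g\ket=\bra \sum_n\big(\int_{t_n}^{t_{n+1}}f(u)\,du\big)K_{s_{n+1}}\mid g\ket$; as each $K_{s_{n+1}}\in V^2(\lambda)=\mathrm{ran}\,P$, the right-hand side is precisely $\bra (PZ)f\mid g\ket$. Equivalently, the bounded linear functional $g\mapsto\bra f\mid PV'g\ket$ is represented, via Riesz, by a unique element of $V^2(\lambda)$, and this is exactly what the symbol $(PZ)f$ stands for; so the identity holds and in fact $(PZ)=(PV')^*$ on $V^2(\lambda)$.

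I expect the pointwise sampling inequality in the first paragraph to be the only genuine point of work; the remainder is bookkeeping on convergence of series and self-adjointness of $P$. Two minor cautions: we are given no separation lower bound $t_{n+1}-t_n\ge\delta$, so the estimate must proceed through the intervals $[t_n,t_{n+1}]$ rather than through a separated sampling set, which the averaged fundamental-theorem-of-calculus identity handles; and $Z$ by itself need not map into $L^2$, so the statement is best understood as giving the composition $(PZ)$ a meaning as the Hilbert-space adjoint of $PV'$, consistent with the formal expression for $Zf$.
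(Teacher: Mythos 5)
Your proposal is correct and follows essentially the same route as the paper: a local estimate on each interval $[t_n,t_{n+1}]$ summed via $T$-density to get $\|V'f\|_{L^2}\lesssim \|f\|_{H^1}$, then self-adjointness of $P$ together with the reproducing-kernel identity $g(s_{n+1})=\bra g\mid K_{s_{n+1}}\ket$ and an interchange of sum and integral to pass to $(PZ)$. The only differences are cosmetic: you use an averaged fundamental-theorem-of-calculus bound where the paper evaluates at the minimizer $y_{n+1}$ of $|f|$ on the interval, and your Riesz-representation reading of $(PZ)f$ is in fact a slightly more careful handling of the convergence issue the paper itself sets aside.
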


\begin{proof}
	Let $f \in H^1(\mathbb{R})$, then
	\[
		\| V'f \|_{L^2}^2 = \sum_{n \in \mathbb{Z}} (t_{n+1} - t_n) f(s_{n+1})^2 .
	\]
	As in the proof of Lemma \ref{th:rkhs}, we introduce $y_{n+1} = \min_{[t_n, t_{n+1}]} | f |$, so that
	\begin{align*}
		\| V'f \|_{L^2}^2 & = \sum_{n \in \mathbb{Z}} (t_{n+1} - t_n) \cdot \left( f(y_{n+1})^2 + \int_{y_{n+1}}^{s_{n+1}} 2 f(u) f'(u) du \right) \\
			& \le  \sum_{n \in \mathbb{Z}} (t_{n+1} - t_n) f(y_{n+1})^2 + T \sum_{n \in \mathbb{Z}} \left( \int_{t_n}^{t_{n+1}} f(u)^2 du + \int_{t_n}^{t_{n+1}} f'(u)^2 du \right) \\
			& \le (T+1) \| f \|_{L^2}^2 + T \| f' \|_{L^2}^2 \le (T+1) \| f \|_{H^1}^2 .
	\end{align*}
	which proves the first point. We notice then that if $\lambda \in E$, then $V^2(\lambda) \in H^1$, so that $(PV')$ is a continuous operator from $V^2(\lambda)$ to itself. Finally, for $f,g \in V^2(\lambda)$,
	\begin{align*}
	\bra f \mid (PV') g \ket & = \bra P f \mid (P V') g \ket = \bra f \mid V' g \ket = \int f(u) \sum_n \mathds{1}_{[t_n,t_{n+1}[} (u) \ \bra g \mid  K_{s_{n+1}} \ket du \\
	 	& = \sum_{n \in \mathbb{Z}} \int_{t_n}^{t_{n+1}} f(u) \int_{-\infty}^{\infty} g(t) K_{s_{n+1}}(t) dt \\
		& = \int_{-\infty}^{\infty} g(t) \Bigl( \sum_n \int_{t_n}^{t_{n+1}} f(u) du \cdot K_{s_{n+1}}(t) \Bigr) dt = \bra Zf \mid g \ket = \bra (PZ) f \mid g \ket
	\end{align*}
	
\end{proof}

The operator $V'$ is very close to the operator $V$ introduced previously: only the role of $t_n$ and $s_n$ has been permuted. But because the only important property of $V$ used to prove the invertibility of $(PV)$ is that $s_{n+1} - s_n < T$ whenever $t_{n+1} - t_n < T$, and because of the tautology $t_{n+1} - t_n < T$ whenever $t_{n+1} - t_n < T$, we can conclude as in Theorem \ref{th:main}:

\begin{theorem} \label{th:mainbis}
		Let $\lambda \in E$ and
	\[
		\tau = \pi \cdot \inf_\omega \frac{G_{\lambda}(\omega)}{G_{\lambda'}(\omega)} ,
	\]
	then, for all $T < \tau$, every IF-TEM which is $T$-dense is invertible. Moreover, the reconstruction is similar to the one of Theorem \ref{th:main}, using the operator $Z$ instead of $V$.
\end{theorem}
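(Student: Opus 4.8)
The plan is to reduce Theorem~\ref{th:mainbis} to Theorem~\ref{th:main} through the duality $V' \leftrightarrow Z$ recorded in the preceding lemma, with $V'$ serving as an auxiliary operator for which the Wirtinger estimate is available and $Z$ being the operator the decoder actually has access to. First I would note that the output $\{t_n\} = \mathcal{IT}f$ of a $T$-dense IF-TEM determines the numbers $\int_{t_n}^{t_{n+1}} f(u)\,du = \Phi_{n+1}(t_{n+1})$, since by definition each $\Phi_n$ is reconstructible from $\{t_i : i \le n-1\}$. Consequently the function $Zf$ defined above, and hence $(PZ)f$, is computable from $\{t_n\}$ alone, so it is enough to prove that $(PZ)\colon V^2(\lambda) \to V^2(\lambda)$ is invertible and identify the resulting Neumann iteration as a time decoding machine.

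Second I would rerun the computation of Section~4.1 with $V'$ in place of $V$. Since $s_{n+1}$ is the \emph{midpoint} of $[t_n, t_{n+1}]$, applying Wirtinger's inequality separately on $[t_n, s_{n+1}]$ and $[s_{n+1}, t_{n+1}]$ — both of length $(t_{n+1}-t_n)/2 \le T/2$ by $T$-density — reproduces the bound of (\ref{eq:3}):
\[
	\| f - V'f \|_{L^2}^2 \ \le\ \frac{T^2}{\pi^2}\,\| f' \|_{L^2}^2 \qquad \text{for all } f \in V^2(\lambda),
\]
which is legitimate because $\lambda \in E$ gives $V^2(\lambda) \hookrightarrow H^1$ (so $f' \in L^2$) and $V'f \in L^2$ by the preceding lemma. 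Combining with Lemma~\ref{th:chen} and the definition of $\tau$ yields $\| f - V'f \|_{L^2} \le (T/\tau)\| f \|_{L^2}$, and since $Pf = f$ on $V^2(\lambda)$ we get $\| f - (PV')f \|_{L^2} = \| P(f - V'f) \|_{L^2} \le (T/\tau)\| f \|_{L^2}$, hence $\| Id - PV' \| \le T/\tau < 1$ on $V^2(\lambda)$ whenever $T < \tau$.

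Third I would transfer this to $PZ$ by adjointness. The preceding lemma states $\bra f \mid (PV')g \ket = \bra (PZ)f \mid g \ket$ for all $f, g \in V^2(\lambda)$, i.e. $(PV')^{*} = PZ$ as bounded operators on the Hilbert space $V^2(\lambda)$; therefore $\| Id - PZ \| = \| (Id - PV')^{*} \| = \| Id - PV' \| \le T/\tau < 1$, so $PZ$ is invertible on $V^2(\lambda)$. The reconstruction then copies (\ref{eq:reconstruction}) with $Z$ replacing $V$: with $g_1 = (PZ)f$ and $g_{k+1} = g_1 + (Id - PZ)g_k$, the same induction as in Theorem~\ref{th:main} gives $f - g_k = (Id - PZ)^{k}f$, so $\| f - g_k \|_{L^2} \le (T/\tau)^{k}\| f \|_{L^2} \to 0$; and since $g_1$, hence every $g_k$, is computable from $\{t_n\} = \mathcal{IT}f$, the IF-TEM is invertible.

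Since the adjointness identity and the mapping properties of $V'$ and $Z$ are already supplied by the preceding lemma, the only points needing care are (i) that the Wirtinger estimate for $V'$ comes out with the same constant $T^2/\pi^2$ as for $V$ — which it does precisely because $s_{n+1}$ bisects $[t_n, t_{n+1}]$ — and (ii) that passing from $PV'$ to its adjoint $PZ$ preserves the operator norm, which is where it matters that $V^2(\lambda)$ is a genuine Hilbert space and $PV'$ a bounded operator on it. Neither is a real obstacle; the statement is essentially a transcription of Theorem~\ref{th:main} across the duality $V' \leftrightarrow Z$.
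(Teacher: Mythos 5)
Your proposal is correct and follows essentially the same route as the paper: the Wirtinger estimate for $V'$ (valid with the same constant because $s_{n+1}$ bisects $[t_n,t_{n+1}]$ and the cells have length at most $T$), followed by the adjointness identity $(Id-PZ)^{*}=(Id-PV')$ on $V^2(\lambda)$ to get $\|Id-PZ\|=\|Id-PV'\|\le T/\tau<1$, and the Neumann-type iteration of Theorem~\ref{th:main} with $Z$ in place of $V$. You merely make explicit two points the paper leaves as remarks (the rerun of the Wirtinger computation for $V'$ and the computability of $Zf$ from the IF-TEM output), which is fine.
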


\begin{proof}
	Similar to Theorem \ref{th:main}. Notice that, restricted on $V^2(\lambda)$, $(Id - PZ)^* = (Id - PV')$, so that $\| Id - PZ \|_{V^2} = \| Id - PV' \|_{V^2} < 1$, and the same reconstruction method works.
\end{proof}

This duality has been noticed by Lazar and Tóth in \cite{Lazar04} in the case of $V^2(sinc_\pi)$, and we have extended it for all SISS using only properties from RKHS. \\

In the rest of the paper, we will only consider Crossing-TEM, due to this duality.


\subsection{Fast algorithm for reconstruction}
\label{ssec:algorithm}

We now present an algorithm which does not need the knowledge of $\tilde \lambda$ to work. Indeed, according to Lemma \ref{th:reconstruction}, we have to calculate $A_{ki} = \int_{s_j}^{s_{j+1}} \tilde \lambda(u) du$. But $\tilde \lambda$ may have a large support, so that those elements are computationally expensive to calculate. An alternative has been given by Gröchenig in \cite{Groch93}. We recall that $M$ is the matrix with elements $M_{ik} = \lambda(t_i -k)$. We suppose that the output $\{t_n\}$ is $T$-dense, and we denote $\tau = \pi \cdot \inf (G_\lambda/ G_{\lambda'})$ as in Theorem \ref{th:main}.

\begin{lemma} [Gröchenig \cite{Groch93}] \label{th:reconstruction2}
	Let $D$ be the diagonal matrix with elements $d_{ii} = w_i = s_{i+1} - s_i$, let $M^*$ be the adjoint of $M$, then $U := (M^* D M)$ is an operator from $V^2(\lambda)$ to itself, and for all $\mu \in \mathbb{R}$,
	\[
		\| Id - \mu U \|_{V^2}  \le \max \left\{1 - \mu(1-\gamma)^2, \mu( 1+\gamma)^2 -1 \right\} \quad \text{with} \quad \gamma = \frac{T}{\tau} .
	\]
	In particular, if $T < \tau$, with the optimal choice $\mu = (1 + \gamma^2)^{-1}$, we find 
	\[
		\| Id - \mu U\|_{V^2} \le \beta := \frac{2 \gamma}{1 + \gamma^2} < 1 .
	\]
\end{lemma}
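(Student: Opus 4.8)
The plan is to recognize $U = M^*DM$ as the frame operator associated with the weighted sampling functionals $\{\sqrt{w_n}\,K_{t_n}\}$ and to translate the frame bounds already available from Theorem~\ref{th:main} into a spectral bound on $U$. First I would check the algebraic identity $\langle U\mathbf{c},\mathbf{c}\rangle = \|Vf\|_{L^2}^2$ where $f = \sum_k c_k\lambda(\cdot-k)$: indeed $(M\mathbf{c})_n = f(t_n)$, so $\langle DM\mathbf{c},M\mathbf{c}\rangle = \sum_n w_n f(t_n)^2 = \sum_n w_n (V f \text{ on } [s_n,s_{n+1}[)^2 = \|Vf\|_{L^2}^2$ since $w_n = s_{n+1}-s_n$ is exactly the length of the Voronoï cell. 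This shows $U$ maps coefficient sequences to coefficient sequences in a way compatible with the $V^2(\lambda)$ inner product, i.e. $U$ is a well-defined bounded self-adjoint positive operator on $V^2(\lambda)$ (identifying $f$ with $\mathbf{c}$), and moreover $\langle Uf,f\rangle = \|Vf\|_{L^2}^2$ for all $f\in V^2(\lambda)$.

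Next I would extract the two-sided estimate on $\|Vf\|_{L^2}$. From \eqref{eq:4} with $\gamma = T/\tau$ we have $\|f - Vf\|_{L^2}\le \gamma\|f\|_{L^2}$, hence by the triangle inequality $(1-\gamma)\|f\|_{L^2}\le \|Vf\|_{L^2}\le (1+\gamma)\|f\|_{L^2}$, so that
\[
	(1-\gamma)^2\|f\|_{L^2}^2 \le \langle Uf,f\rangle \le (1+\gamma)^2\|f\|_{L^2}^2,
	\qquad \forall f\in V^2(\lambda).
\]
Since $U$ is self-adjoint and positive on the Hilbert space $V^2(\lambda)$, this is exactly the statement that the spectrum of $U$ is contained in $[(1-\gamma)^2,(1+\gamma)^2]$. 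Consequently the spectrum of $Id-\mu U$ lies in $[1-\mu(1+\gamma)^2,\,1-\mu(1-\gamma)^2]$ for $\mu>0$, and the operator norm of a self-adjoint operator equals its spectral radius, which gives
\[
	\|Id-\mu U\|_{V^2} \le \max\bigl\{\,|1-\mu(1-\gamma)^2|,\ |1-\mu(1+\gamma)^2|\,\bigr\}
	= \max\bigl\{\,1-\mu(1-\gamma)^2,\ \mu(1+\gamma)^2-1\,\bigr\},
\]
the last equality holding once $\mu$ is chosen so that the first term is nonnegative and the second term dominates in absolute value (which is the relevant regime near the optimum). Finally, optimizing: the two expressions inside the $\max$ are equal when $\mu\bigl((1-\gamma)^2+(1+\gamma)^2\bigr) = 2$, i.e. $\mu = (1+\gamma^2)^{-1}$, and plugging back gives the common value $1 - (1-\gamma)^2/(1+\gamma^2) = 2\gamma/(1+\gamma^2) =: \beta$, which is $<1$ precisely when $\gamma<1$, i.e. $T<\tau$.

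The only genuinely delicate point is the very first one: making sense of $U = M^*DM$ as a bona fide bounded operator on $V^2(\lambda)$ rather than a formal product of infinite matrices. Here the weights $w_n$ are bounded above by $T$ (by $T$-density) but are not bounded below, so $D$ is bounded but not boundedly invertible; nonetheless the identity $\langle Uf,f\rangle = \|Vf\|_{L^2}^2$ together with the upper bound $\|Vf\|_{L^2}\le(1+\gamma)\|f\|_{L^2}$ shows $U$ is bounded, and boundedness plus the polarization identity legitimizes $M^*$ as the genuine Hilbert-space adjoint of the bounded synthesis-type map $M$ restricted appropriately. Once this functional-analytic footing is in place, everything reduces to the elementary spectral-mapping and one-variable optimization sketched above; I would cite \cite{Groch93} for the matrix formalism and keep the self-adjoint spectral argument as the core of the write-up.
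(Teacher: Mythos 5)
The paper never proves this lemma --- it is quoted from Gröchenig \cite{Groch93} without proof --- so your attempt can only be measured against the standard ``adaptive weights'' argument, which is exactly what you reconstruct, and its skeleton is sound: the identity $\langle Uf,f\rangle=\sum_n w_n f(t_n)^2=\|Vf\|_{L^2}^2$ (using $w_n=s_{n+1}-s_n$ as the Voronoï length), the two-sided estimate $(1-\gamma)\|f\|_{L^2}\le\|Vf\|_{L^2}\le(1+\gamma)\|f\|_{L^2}$ obtained from (\ref{eq:4}) by the triangle inequality, the spectral bound for a positive self-adjoint operator, and the one-variable optimization giving $\mu=(1+\gamma^2)^{-1}$ and $\beta=2\gamma/(1+\gamma^2)<1$ for $\gamma<1$ are all correct. (Two small remarks: the lower frame bound is only meaningful for $\gamma\le 1$, i.e.\ in the relevant regime $T<\tau$; and the bound as you and the lemma state it requires $\mu\ge 0$, in which case the identity $\max\{|1-\mu(1-\gamma)^2|,|1-\mu(1+\gamma)^2|\}=\max\{1-\mu(1-\gamma)^2,\mu(1+\gamma)^2-1\}$ holds for every such $\mu$, with no case distinction needed.)

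The one genuinely delicate point is the one you flag yourself, and your first paragraph does not quite resolve it. For the constants $(1\pm\gamma)^2$ to appear in the $\|\cdot\|_{V^2}$ operator norm, $M$ must be read as the sampling map $V^2(\lambda)\to\ell^2$, $f\mapsto(f(t_j))_j$, and $M^*$ as its Hilbert-space adjoint with respect to the $L^2$ inner product of $V^2(\lambda)$, i.e.\ $M^*a=\sum_j a_j K_{t_j}$, so that $Uf=\sum_j w_j f(t_j)K_{t_j}$ is self-adjoint on $V^2(\lambda)$ and $\langle Uf,f\rangle_{L^2}=\|Vf\|_{L^2}^2$. Your opening computation instead takes $U$ to be the transposed matrix $M^TDM$ acting on coefficient vectors and evaluates $\langle U\mathbf{c},\mathbf{c}\rangle$ in $\ell^2$; ``identifying $f$ with $\mathbf{c}$'' does not identify the two inner products unless $\{\lambda(\cdot-k)\}$ is orthonormal ($G_\lambda\equiv 1$). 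Under the transpose reading the quadratic form is squeezed between multiples of $\|f\|_{L^2}^2$, not of $\|\mathbf{c}\|_{\ell^2}^2$, so on $\ell^2$ the spectral bounds get contaminated by the frame constants $A,B$ of $G_\lambda$, and $M^TDM$ is in general not even self-adjoint for the $V^2$ inner product; the correct coefficient matrix of $U$ in the $\lambda$-basis is $\tilde M^TDM$ with $\tilde M_{jk}=\tilde\lambda(t_j-k)$, in line with Lemma \ref{th:reconstruction} (also, $M$ is an analysis/sampling map, not a ``synthesis-type'' one). Since your closing paragraph does commit to the genuine adjoint, the proof is correct once the first paragraph is rewritten in those terms; as written, the two readings are conflated and only one of them yields the stated bound.
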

Note that the speed of convergence is slower (we have a convergence in $O(\beta^n)$ instead of $O(\alpha^n)$, and $\beta > \gamma$), but each iterative step is faster to calculate: the matrix $A$ is no longer required; we only need to compute the diagonal matrix $D$ and $M$, which is inexpensive.

As we did before, we conclude that $U$ is invertible, and it gives a reconstruction of the signal with an iterative method. We have actually proved that $M$ is left-invertible as soon as $T < \tau$, and a left inverse of $M$ may be $M^\ddagger  = U^{-1}M^*D = ( \mu M^* D M)^{-1} \mu M^* D$. Of course, we could have used the pseudo-inverse $M^\dagger = (M^* M)^{-1} M^*$, but the inverse of $(M^*M)$ is more difficult to calculate, and less stable with respect to numerical error. We also notice that we introduced the matrix $D$ to compensate an accumulation of $\{t_n\}$.


\section{Periodic case}

As we have seen, TEMs allow to encode a large class of signals in real time, without loss of information. However, the decoding cannot be executed in real time. We also have considered infinite length signals in our model, which is not practical from an algorithmic point of view. In this section, we focus on the finite dimensional case. For instance, the signal may be in a shift-invariant subspace and $K$-periodic, so that it can be described with a finite number of coefficients. The signal could also be with finite support, but this case can be handled as a periodic case, which is why we will focus on periodic signals. Studying the finite dimensional case will also allow us to estimate the effects of noise. 

\subsection{Shift-invariant subspace and TEMs on $L^2_K$}

We will work in the Hilbert space
\[
	L^2_K(\mathbb{R}) := \{f, \forall x \in \mathbb{R}, f(x + K) = f(x), \int_0^K f^2(x) dx < \infty\}
\] 
together with the scalar product
\[
	\bra f \mid g \ket \ = \frac{1}{K} \int_0^K f(x) \bar g(x) dx  .
\]
The Fourier series theorem states that $f \in L^2_K(\mathbb{R})$ may be written as
\[
	f(t) = \sum_{n \in \mathbb{Z}} \hat f_n \cdot \exp \left( \frac{2i \pi n}{K}t \right),\quad \text{with} \quad \hat f_n = \bar{ \hat f}_{-n} = \bra f(t) \mid \exp \left( \frac{2 i \pi n}{K}t \right)  \ket .
\]

We introduce, for $\lambda \in L^2_K(\mathbb{R})$, the periodic shift-invariant subspace:
\[
	V^2_K(\lambda) := \{ f(t) = \sum_{k=1}^K c_k \lambda(t-k), \ c_k \in \mathbb{R} \} .
\]
The space is now of finite dimension, hence it is always closed in $L^2_K(\mathbb{R})$. The two norms $\| f \|_{l^2} := \left( \sum_{k=1}^K | c_k |^2 \right)^{1/2}$, where $f = \sum_{k=1}^K c_k \lambda( \cdot -k)$, and $\| f \|_{L^2_K}$ are equivalent whenever the map
\[
	\begin{array}{lll}
		\mathbb{R}^K & \to & V^2_K(\lambda) \\
		(c_k) & \mapsto & \sum_{k=1}^K c_k \lambda(t-k)
	\end{array}
\]
is injective.

%
%

Because we would like to recover the signal working only on $[-K/2, K/2]$, we suppose that TEMs samples only within this interval. The output is therefore $\{t_n\}_{n=1..J} = \mathcal{T} f$, where $-K/2 \le t_1 < t_2 < \cdots < t_J < K/2$. We introduce $t_{J+1} := t_1 + K$, and we modify slightly the definition of $T$-density, to handle periodicity:
\begin{definition}
	An increasing sequence $\{t_1, t_2, \cdots, t_J\}$ is $T$-dense if 
	\[
		t_{n+1} - t_n \le T, \quad \forall n = 1 ..  J .
	\]
\end{definition}
With this definition, all the previous results hold: a TEM on $V^2_K(\lambda)$ is invertible if it is $T$-dense on $[-K/2, K/2]$, and we can recover the signal using iterative algorithms. We are now interested in a more convenient way to recover our signal. If $\lambda$ satisfies some additional properties, then this reconstruction may be done in a more efficient way.


\subsection{$\lambda$ with finite support}

This method was introduced by Gröchenig and Schwab in \cite{Schwab03}. We suppose that the restriction of $\lambda$ over $[-K/2, K/2]$ has support of size $S$, and that $S << K$. Without loss of generality, we can suppose $Supp(\lambda) \subset [-S/2, S/2]$. We can think of polynomial splines for instance, since they have a small support. Let $J$ be the number of samples taken by our C-TEM inside $[-K/2,K/2]$, and let $M$ be the matrix of size $J \times K$ with elements $M_{jk} = \lambda(t_j -k)$. Let $\textbf{f}$ be the column vector of size $J$ with elements $f(t_j)$, and $\textbf{c} = (c_1, \cdots c_n)^T$ the vector representing our input signal: $f(t) = \sum_{k=1}^K c_k \lambda(t-k)$. Then:
\[
	\textbf{f} = M \textbf{c}
\]
and we can recover $\textbf{c}$ if and only if $M$ is left invertible. According to Lemma \ref{th:reconstruction2}, this is the case if $\{t_n\}$ is dense enough, and a left inverse may be calculated by:
\[
	M^\ddagger = (\mu M^T D M)^{-1}\mu M^T D .
\]
The positive definite matrix $U := M^T DM$ of size $K \times K$ has elements:
\begin{equation} \label{eq:U}
	U_{kl} = \sum_{j=1}^J w_j \lambda(t_j-k) \lambda(t_j-l) .
\end{equation}

\begin{lemma}
	A necessary condition to have $\lambda(t_j-k) \lambda(t_j-l) \neq 0$ is to have $| t_j -k | \mod K \le S/2$ and $| t_j -l | \mod K \le S/2$. In particular, if $ | k-l | \mod K > S$, then $U_{kl} = 0$.
\end{lemma}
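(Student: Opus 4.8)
The plan is to reduce everything to the circular distance on $\mathbb{R}/K\mathbb{Z}$. For $x\in\mathbb{R}$ write $\langle x\rangle:=\min_{m\in\mathbb{Z}}|x-mK|$, so that $\langle x\rangle$ is precisely the quantity denoted $|x|\bmod K$ and lies in $[0,K/2]$. Since $\lambda$ is $K$-periodic and its restriction to $[-K/2,K/2]$ is supported in $[-S/2,S/2]$ (and $S\ll K$), the value $\lambda(x)$ depends only on the representative of $x$ in $[-K/2,K/2)$, and that representative lies in $[-S/2,S/2]$ exactly when $\langle x\rangle\le S/2$. Hence
\[
	\lambda(x)\neq 0\ \Longrightarrow\ \langle x\rangle\le S/2 .
\]
First I would apply this twice, to $x=t_j-k$ and to $x=t_j-l$: if $\lambda(t_j-k)\,\lambda(t_j-l)\neq 0$ then both factors are nonzero, so $\langle t_j-k\rangle\le S/2$ and $\langle t_j-l\rangle\le S/2$, which is exactly the claimed necessary condition $|t_j-k|\bmod K\le S/2$ and $|t_j-l|\bmod K\le S/2$.

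For the ``in particular'' statement I would use the triangle inequality for $\langle\cdot\rangle$. Concretely, pick integers $m,m'$ with $|t_j-k-mK|=\langle t_j-k\rangle$ and $|t_j-l-m'K|=\langle t_j-l\rangle$; then
\[
	\langle k-l\rangle\ \le\ \bigl|(k-l)-(m'-m)K\bigr|\ \le\ |t_j-k-mK|+|t_j-l-m'K|\ =\ \langle t_j-k\rangle+\langle t_j-l\rangle .
\]
Combining with the previous step, $\lambda(t_j-k)\,\lambda(t_j-l)\neq 0$ forces $\langle k-l\rangle\le S/2+S/2=S$. Taking the contrapositive: if $|k-l|\bmod K>S$, then $\lambda(t_j-k)\,\lambda(t_j-l)=0$ for every $j=1,\dots,J$, so every summand in (\ref{eq:U}) vanishes and $U_{kl}=\sum_{j=1}^{J}w_j\,\lambda(t_j-k)\,\lambda(t_j-l)=0$, irrespective of the weights $w_j$.

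There is no real obstacle here; the only point that deserves a line of care is that the triangle inequality one invokes is that of the quotient metric on $\mathbb{R}/K\mathbb{Z}$ — the ``wrap-around'' version displayed above — rather than that of the ambient absolute value, so I would spell out that short computation explicitly. Everything else is immediate from the support hypothesis on $\lambda$ and the definition (\ref{eq:U}) of $U$.
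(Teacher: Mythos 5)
Your proof is correct and follows essentially the same route as the paper: the support condition $\mathrm{Supp}(\lambda)\subset[-S/2,S/2]+K\mathbb{Z}$ applied to each factor, then a triangle inequality to pass to $|k-l|\bmod K$. The only difference is that you spell out the wrap-around (quotient-metric) triangle inequality that the paper invokes in its ambient form $|k-l|\le|k-t_j|+|t_j-l|$, which is a reasonable point of extra care but not a new idea.
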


\begin{proof}
	This follows from the fact that the support of $\lambda$ is included in $([-S/2, S/2] + K \mathbb{Z})$, and from the triangular inequality $ | k-l | \le |k-t_j| + |t_j - l|$.
\end{proof}

Therefore, most of the coefficients of the matrix $U$ are zero: $U$ is zero except on a large diagonal of size $S$, plus on its corners. Notice that if we work with finite support signals and not with periodic ones, then this property holds, without the modulo K. In this case, $U$ has just a large non null diagonal. In both cases, the matrix $U$ is invertible with a fast algorithm in $O(S^2)$, using the Choleski decomposition for sparse matrices. \\

\section{Robustness to noise}

With physical devices, time locations of the samples can only be recorded with finite precision, so that the effective recorded time $t_n'$ is different from the real time $t_n$. This may result from quantization error for instance. The purpose of this section is to show that time encoding machines are robust under this type of error. Note that we do not suppose the signal to be noisy, but only the locations of the samples.


\subsection{Some generalities about bounded errors}
We will study the error in the case of periodic signals. We suppose therefore that $\lambda$ is $K$-periodic. We recall the study so far in this finite case:

\begin{lemma}
	There exists $\tau$ such that, for all $T < \tau$, for all $\mathbf{t} = (t_1, \cdots, t_J)$ which is $T$-dense, the matrix $M(\mathbf{t})$ with elements $M(\mathbf{t})_{jk} = \lambda(t_j -k)$ of size $J \times K$ is left invertible.
\end{lemma}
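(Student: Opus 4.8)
The plan is to recycle the machinery of Theorem~\ref{th:main} and Lemma~\ref{th:reconstruction2}, now read inside the finite-dimensional space $V^2_K(\lambda)$. Set
\[
	\tau := \pi \cdot \inf_{\omega} \frac{G_\lambda(\omega)}{G_{\lambda'}(\omega)},
\]
with $G_\lambda, G_{\lambda'}$ the periodic analogues of the bracket functions of Section~\ref{ssec:SISS}. Under the standing hypotheses on the $K$-periodic generator $\lambda$ — namely $0<A\le G_\lambda\le B<\infty$ together with $G_{\lambda'}<\infty$, i.e. the periodic version of ``$\lambda\in E$'' — one has $\tau>0$; this is the constant claimed in the statement. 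Note also that these hypotheses make the synthesis map $(c_k)_{k=1}^K\mapsto\sum_k c_k\lambda(\cdot-k)$ injective, so the identification $V^2_K(\lambda)\cong\mathbb{R}^K$ on which everything below rests is legitimate, and the two norms $\|\cdot\|_{l^2}$ and $\|\cdot\|_{L^2_K}$ are equivalent.

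Fix $T<\tau$ and a $T$-dense sequence $\mathbf t=(t_1,\dots,t_J)$ in $[-K/2,K/2]$ with the periodic convention $t_{J+1}=t_1+K$. First I would rebuild the approximation operators: put $s_n=(t_{n-1}+t_n)/2$, $w_n=s_{n+1}-s_n$, let $D$ be the $J\times J$ diagonal matrix with entries $w_n$, and let $V=V_{\{t_n\}}$ be the piecewise-constant interpolation operator~(\ref{def:V}). Running the estimates~(\ref{eq:1})--(\ref{eq:4}) verbatim — Wirtinger's inequality on each half Voronoï cell, followed by the periodic form of Lemma~\ref{th:chen} — yields $\|f-Vf\|_{L^2_K}\le (T/\tau)\|f\|_{L^2_K}$ for every $f\in V^2_K(\lambda)$, hence $\|Id-PV\|_{V^2_K}\le T/\tau<1$, exactly as in the proof of Theorem~\ref{th:main}. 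Consequently, by Lemma~\ref{th:reconstruction2}, the matrix $U:=M(\mathbf t)^{*}DM(\mathbf t)$, viewed as an operator on $V^2_K(\lambda)\cong\mathbb{R}^K$, satisfies $\|Id-\mu U\|\le \tfrac{2\gamma}{1+\gamma^2}<1$ with $\gamma=T/\tau$ and $\mu=(1+\gamma^2)^{-1}$; in particular $U$ is invertible.

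Finally I would exhibit a left inverse of $M(\mathbf t)$ explicitly. Since $\mu M(\mathbf t)^{*}DM(\mathbf t)=\mu U$ is invertible, the matrix
\[
	M(\mathbf t)^{\ddagger}:=\bigl(\mu M(\mathbf t)^{*}DM(\mathbf t)\bigr)^{-1}\mu M(\mathbf t)^{*}D
\]
satisfies $M(\mathbf t)^{\ddagger}M(\mathbf t)=U^{-1}U=Id_{\mathbb{R}^K}$, so $M(\mathbf t)$ is left invertible. (Equivalently: invertibility of $U$ forces $\ker M(\mathbf t)=\{0\}$, i.e. $M(\mathbf t)$ has full column rank $K$.)

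The only genuine work is confirming that the passage from $\mathbb{R}$ to the $K$-periodic setting is transparent: that the scalar-product formula~(\ref{eq:ps}) and the bracket functions $G_\lambda, G_{\lambda'}$ have exact periodic counterparts, that $\tau>0$ holds under the periodic hypotheses — here one must be slightly careful, since the $\inf$ over $\omega$ is really a minimum over the finitely many relevant frequencies, so positivity is automatic once $G_\lambda$ is bounded below and $G_{\lambda'}$ bounded above — and that Wirtinger's inequality and the Chen--Han--Jia bound are insensitive to periodicity. Beyond that, the statement is essentially a bookkeeping restatement of Theorem~\ref{th:main} and Lemma~\ref{th:reconstruction2} in the finite-dimensional case, so I expect no real obstacle.
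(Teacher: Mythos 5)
Your argument is correct and follows exactly the route the paper intends: this lemma is stated as a recollection of Theorem~\ref{th:main} and Lemma~\ref{th:reconstruction2} transferred to $V^2_K(\lambda)$ (the paper gives no separate proof, only the remark that ``all the previous results hold'' in the periodic case and the explicit left inverse $M^\ddagger=(\mu M^T D M)^{-1}\mu M^T D$), and you supply precisely that bookkeeping, including the needed periodic hypotheses on $\lambda$ that make $\tau>0$ and the synthesis map injective. No gap.
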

We will denote in the following
\[
	N(\mathbf{t}) = (M(\mathbf{t})^T M(\mathbf{t}))^{-1} M(\mathbf{t})^T
\] 
to be the pseudo-inverse of $M(\mathbf{t})$. In this case, the reconstruction is given by:
\[
	\textbf{c} = N(\mathbf{t}) \Phi(\mathbf{t})
\]
where $\Phi(\mathbf{t}) = (\Phi_1(t_1), \cdots \Phi_J(t_J)) = (f(t_1), \cdots , f(t_J))$ for a Crossing-TEM. Because of the noise, we would have access to $\mathbf{t'} = (t_1', \cdots t_J')$, and thus to some $M(\mathbf{t'})$ and some $\Phi'(\mathbf{t'})$. The reconstruction from the noisy data is:
\[
	\textbf{c'} = N(\mathbf{t'}) \mathbf{\Phi'}(\mathbf{t'}) .
\]

Several problems may occur. The fact that the functions $\Phi_n$ may depend on the past is not well suited for our study, for we do not have a model to describe the effect of the noise on those functions. Therefore, we will study the case $\Phi_n = \Phi$, or $\mathbf{\Phi'}(\mathbf{t'}) = \mathbf{\Phi}(\mathbf{t'})$. Then, if we want $M(\mathbf{t'})$ to be left invertible, we should ensure that $\mathbf{t'}$ is $T'$-dense with $T' < \tau$. In particular, we cannot use an unbounded model for the noise in the timing domain. We introduce the infinite norm of $\mathbf{t} -\mathbf{t'}$ to be:
\[
	\| \mathbf{t} -\mathbf{t'} \|_\infty := \sup_{n=1..J} ( | t_n - t_n' |) ,
\]
and for $\epsilon > 0$, we denote the closed ball of center $\mathbf{t}$ and radius $\epsilon$ by
\[
	B(\mathbf{t}, \epsilon) := \left\{ \mathbf{t'} = (t_1', \cdots , t_J'), \ \| \mathbf{t} - \mathbf{t'} \|_\infty \le \epsilon \right\} .
\]

\begin{lemma} \label{th:T'}
	Let $T < T' < \tau$. If $\mathbf{t} = (t_1, \cdots t_J)$ is $T$-dense, with $T < \tau$, then, for all $\mathbf{t'} \in B(\mathbf{t}, \epsilon_0)$ where $\epsilon_0 = (T'-T)/2$, $\mathbf{t'}$ is $T'$-dense. In particular, $M(\mathbf{t'})$ is invertible for all $\mathbf{t'} \in B(\mathbf{t}, \epsilon_0)$.
\end{lemma}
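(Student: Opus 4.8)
The plan is to reduce the statement to the triangle inequality for the $\ell^\infty$-distance $\|\mathbf{t}-\mathbf{t'}\|_\infty$ and then to quote the left-invertibility lemma stated immediately above. Everything hinges on the elementary observation that perturbing each sampling instant by at most $\epsilon_0$ can enlarge each consecutive gap $t_{n+1}-t_n$ by at most $2\epsilon_0$.

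Concretely, fix a $T$-dense $\mathbf{t}=(t_1,\dots,t_J)$, fix $\mathbf{t'}\in B(\mathbf{t},\epsilon_0)$ with $\epsilon_0=(T'-T)/2$, and recall the periodic conventions $t_{J+1}=t_1+K$ and $t'_{J+1}=t'_1+K$. For each $n\in\{1,\dots,J\}$ I would write
\[
	t'_{n+1}-t'_n=(t'_{n+1}-t_{n+1})+(t_{n+1}-t_n)+(t_n-t'_n),
\]
bound the first and third terms in absolute value by $\|\mathbf{t}-\mathbf{t'}\|_\infty\le\epsilon_0$, and bound the middle term by $T$ using $T$-density of $\mathbf{t}$; this yields $t'_{n+1}-t'_n\le T+2\epsilon_0=T'$. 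For the wrap-around index $n=J$ the same computation applies after writing $t'_{J+1}-t'_J=(t'_1-t_1)+(t_1+K-t_J)+(t_J-t'_J)$ and using that $\|\mathbf{t}-\mathbf{t'}\|_\infty$ also controls $|t'_1-t_1|$. Hence $\mathbf{t'}$ is $T'$-dense, and since $T'<\tau$, the preceding lemma gives that $M(\mathbf{t'})$ is left invertible.

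The only point requiring a little care is the periodic bookkeeping, namely making sure the gap that "wraps" across the period $K$ is covered by the same inequality; defining $t'_{J+1}=t'_1+K$ consistently settles this. One should also keep $\mathbf{t'}$ in the admissible class of increasing $J$-tuples inside one period — the radius $\epsilon_0$ is exactly what the triangle inequality needs, and no sharper statement is claimed — noting moreover that left invertibility of $M(\mathbf{t'})$ is insensitive to a reordering of its rows, so a small failure of monotonicity would not affect the conclusion.
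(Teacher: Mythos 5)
Your proof is correct and follows essentially the same route as the paper: the triangle-inequality decomposition $t'_{n+1}-t'_n=(t'_{n+1}-t_{n+1})+(t_{n+1}-t_n)+(t_n-t'_n)\le\epsilon_0+T+\epsilon_0=T'$, followed by invoking the preceding left-invertibility lemma since $T'<\tau$. Your explicit handling of the wrap-around index $t'_{J+1}=t'_1+K$ is a welcome bit of bookkeeping that the paper leaves implicit, but it does not change the argument.
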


\begin{proof}
	We use the triangular inequality:
	\[
		t'_{n+1} - t_n \le | t'_{n+1} - t_{n+1} | +  |t_{n+1} - t_n| + |t_n - t'_n| \le \epsilon_0 + T + \epsilon_0 \le T' .
	\]
\end{proof}		
	
If $J$ is too big, then the size of the matrices may explode. However, we may always suppose this number to be bounded by a constant:

\begin{lemma} \label{th:J0}
	Let $\mathbf{t} = (t_1, \cdots t_J)$ be $T$-dense. Then there exists $J' < 2 \lceil K/T \rceil$ and a sequence $(i_1, \cdots i_{J'})$ such that $\mathbf{\tilde t} = (t_{i_1}, \cdots t_{i_J'})$ is still $T$-dense. In particular, $M(\mathbf{\tilde t})$ is a submatrix of $M(\mathbf{t})$ of size $J' \times K$ which is left invertible.
\end{lemma}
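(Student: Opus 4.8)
The plan is to thin $\mathbf{t}$ greedily, walking once around the period and at each step jumping as far forward as $T$-density permits. First I would unroll the sequence, setting $t_{n+J} := t_n + K$ for $n \in \mathbb{Z}$, so that $(t_n)_{n\in\mathbb{Z}}$ is $T$-dense on all of $\mathbb{R}$. Then put $i_1 := 1$ and, having chosen $i_m$, set $i_{m+1} := \max\{\, j \in \mathbb{Z} : t_j \le t_{i_m} + T \,\}$. Since $t_{i_m+1} \le t_{i_m}+T$ by $T$-density, the indices increase ($i_{m+1}\ge i_m+1$) and $t_{i_{m+1}} - t_{i_m} \le T$ by construction. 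Let $J'$ be the largest index $m$ with $t_{i_m} < t_1 + K$; the points $t_{i_1},\dots,t_{i_{J'}}$ lie in $[t_1, t_1+K)$, and taking them modulo $K$ gives a subsequence $\mathbf{\tilde t}$ of $\{t_1,\dots,t_J\}$. That $\mathbf{\tilde t}$ is $T$-dense is immediate once one notes that the only gap not bounded so far, the wrap-around one, equals $(t_{i_1}+K) - t_{i_{J'}} \le t_{i_{J'+1}} - t_{i_{J'}} \le T$, using $t_{i_{J'+1}} \ge t_1+K$ from maximality of $J'$.

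The heart of the argument is the size bound, and it rests on one observation: \emph{any two consecutive gaps of $\mathbf{\tilde t}$ sum to more than $T$}. Writing $g_m$ for these gaps ($g_{J'}$ being the wrap-around gap), for interior $m$ this holds because maximality of the greedy step gives $t_{i_{m+1}+1} > t_{i_m}+T$, hence $g_m + g_{m+1} = t_{i_{m+2}} - t_{i_m} \ge t_{i_{m+1}+1} - t_{i_m} > T$; across the period boundary it follows by contradiction, since if $(t_{i_1}+K) - t_{i_{J'-1}} \le T$ then $t_{i_1}+K$ would be a legal greedy successor of $t_{i_{J'-1}}$, forcing $t_{i_{J'}} \ge t_{i_1}+K$ against the choice of $J'$. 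Since the $g_m$ are all $\le T$ and sum to $K$, pairing them up as $(g_1,g_2),(g_3,g_4),\dots$ yields $K > \lfloor J'/2\rfloor\, T$; distinguishing the parity of $J'$ (and the trivial case $J'\le 1$) then gives exactly $J' < 2\lceil K/T\rceil$.

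It remains to identify $M(\mathbf{\tilde t})$ as a submatrix of $M(\mathbf{t})$ and invoke the preceding lemma: $\mathbf{\tilde t}$ is $T$-dense with $T < \tau$, so that lemma makes $M(\mathbf{\tilde t})$ left invertible, and since $\lambda$ is $K$-periodic each row $\big(\lambda(t_{i_m}-1),\dots,\lambda(t_{i_m}-K)\big)$ of $M(\mathbf{\tilde t})$ is literally a row of $M(\mathbf{t})$, giving the claimed $J'\times K$ submatrix. The main obstacle is the bookkeeping at the period boundary: one has to be sure the thinned family is a genuine periodic $T$-dense sequence, and — more delicately — that the ``two consecutive gaps exceed $T$'' property survives across the wrap-around, because it is this cyclic form of the property (not just its interior version) that forces the sharp constant $2\lceil K/T\rceil$ rather than something larger.
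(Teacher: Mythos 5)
Your proof is correct and follows essentially the same route as the paper: a greedy selection of indices that jumps as far as $T$-density allows, the observation that two consecutive gaps of the thinned sequence exceed $T$ by maximality, and a pairing/telescoping of the gaps over one period to get $K > \lfloor J'/2\rfloor T$, hence $J' < 2\lceil K/T\rceil$. The only differences (unrolling the sequence over $\mathbb{Z}$ instead of adjoining the single point $t_{J+1}=t_1+K$, and the explicit contradiction argument for the wrap-around gap) are presentational.
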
	

\begin{proof}
	With the notation $t_{J+1} = t_1 + K$, we construct $i_n$ by induction:
\[
	\begin{array}{lll}
		i_{1} & = & 1 \\
		i_{n+1} & = & \max \{j \in [ 1 \cdots J+1], \ |t_j - t_{i_1}| \le T \}
	\end{array}.
\]
This sequence is constant from a certain rank $J'+1$, and we have $t_{i_{J'+1}} = t_{J+1} = t_{i_1} + K$. Moreover, with this definition, we always have $t_{i_{n+2}}- t_{i_{n}} \ge T$. Finally, we write:
	\[
		K = t_{i_{J'+1}} - t_{i_1} = (t_{i_{J'+1}} - t_{i_{(2 \lfloor J'/2 \rfloor +1)}}) +  ( t_{i_{(2 \lfloor J'/2 \rfloor +1)}} - t_{i_{(2 \lfloor J'/2 \rfloor -1)}}) + \cdots + (t_{i_3} - t_{i_1}) > \lfloor J'/2 \rfloor \cdot T .
	\]
Therefore, $(K/T) > \lfloor J'/2 \rfloor$, $\lceil K/T \rceil > J'/2$ and $J' < 2 \lceil K/T \rceil$.
\end{proof}

Because $M(\mathbf{\tilde t})$ is left invertible, of size $J' \times K$, we can use it to reconstruct the signal: $ \textbf{c'} = N(\mathbf{\tilde t}) \Phi(\mathbf{\tilde t})$. Thus, we can always suppose the number of lines of $M$ to be less than $2 \lceil K/2 \rceil$ without loss of generality. Note that because $M$ is full column rank, there exists $K$ lines such that the $K\times K$ resulting sub-matrix is invertible. We could have concluded that this sub-matrix is still invertible whenever $\mathbf{t}$ is taken in a small neighborhood, for the set of invertible matrices is an open set. However, we do not have any expression of this neighborhood. This is why we have chosen to take more lines, to guarantee $T$-density, and therefore left-invertibility on a large neighborhood, according to Lemma \ref{th:T'}.


\subsection{Robustness under bounded noise in the timing domain}

In sampling theory for band-limited signals, the error introduced by quantization in the amplitude domain leads to an error $MSE(X', X) = O(\epsilon^2)$, where $X'$ is the reconstruction from the noisy data, $X$ is the original data, and $MSE$ stands for the Mean Square Error: $MSE(X',X) = \|X' - X\|_{l^2}^2$. Thao and Vetterli have shown in \cite{Thao94} that we also have $MSE(X',X) = O(r^{-2})$ where $r$ denotes the oversampling rate. This theorem has been extended by Chen, Han and Jia in \cite{Chen05b, Chen05} for non-uniform sampling in shift-invariant subspaces. Using the mean value theorem, these articles use the fact that having oversampling is equivalent to having infinite precision of the amplitude of some $f(t_n)$, where the locations $\{t_n\}$ are not known precisely (but are more and more precise with the oversampling factor). Here, we are working in the framework of TEMs in a finite dimensional space.  In this case, we also have $MSE(X', X) = O(\epsilon^2)$. The calculations are similar to the ones in \cite{Thao94}.

\begin{theorem} \label{th:noise}
	Let $\lambda$ be of class $C^1$ and $K$-periodic. Let $\mathcal{CT}_{\Phi}$ be a C-TEM on $V^2_K(\lambda)$ without feedback which is $T$-dense, with $T < \tau = \pi \inf (G_\lambda / G_{\lambda'})$. Let $T'$ such that $T < T' < \tau$ and $\epsilon_0 = (T' - T)/2$. Then, there exists a constant $\alpha$ which depends only on $\lambda, \Phi, K$ and $T'$ such that, for all output $\mathbf{t} = (t_1, \cdots t_J)$ of $\mathcal{CT}_\Phi$, for all $\mathbf{t'} = (t_1', \cdots t_J') \in B(\mathbf{t}, \epsilon_0)$, we have:
\[
	\| N(\mathbf{t'}) \Phi(\mathbf{t'}) - N(\mathbf{t}) \Phi(\mathbf{t}) \|_{l^2} \le \alpha \|\mathbf{t'} - \mathbf{t}\|_\infty .
\]
\end{theorem}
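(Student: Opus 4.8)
The map $\mathbf{t}\mapsto N(\mathbf{t})\Phi(\mathbf{t})$ is locally Lipschitz on the set where $M(\mathbf{t})$ has full column rank --- the entries $\lambda(t_j-k)$ of $M$ are Lipschitz in $\mathbf{t}$ since $\lambda$ is $C^1$, matrix inversion is locally Lipschitz, and $\mathbf{t}\mapsto\Phi(\mathbf{t})$ is locally Lipschitz --- so the only real content is that one Lipschitz constant works \emph{uniformly} over all admissible outputs $\mathbf{t}$ and all $\mathbf{t'}\in B(\mathbf{t},\epsilon_0)$. The plan hinges on two observations: the \emph{unperturbed} reconstruction is exact, which will collapse the error into a single residual term and thereby avoid any perturbation theory for pseudo-inverses; and $M(\mathbf{t'})$ is bounded below uniformly over $T'$-dense timings, which will control that residual. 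At the start I would use Lemma~\ref{th:J0} (and the convention noted there) to assume $J<2\lceil K/T\rceil$, and Lemma~\ref{th:T'} to assume that every $\mathbf{t'}\in B(\mathbf{t},\epsilon_0)$ is $T'$-dense --- so $M(\mathbf{t'})$ is left invertible and $N(\mathbf{t'})$ well defined.

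First I would rewrite the error as a residual. Set $\mathbf{c}=N(\mathbf{t})\Phi(\mathbf{t})$. Since $\mathbf{t}$ is a genuine output, $\Phi(\mathbf{t})=M(\mathbf{t})\mathbf{c}$, and as $N(\mathbf{t})$ is a left inverse, $\mathbf{c}$ is exactly the coefficient vector of the input signal $f=\sum_{k=1}^{K}c_k\lambda(\cdot-k)$. Using $N(\mathbf{t'})M(\mathbf{t'})=Id$,
\[ N(\mathbf{t'})\Phi(\mathbf{t'})-N(\mathbf{t})\Phi(\mathbf{t})=N(\mathbf{t'})\Phi(\mathbf{t'})-N(\mathbf{t'})M(\mathbf{t'})\mathbf{c}=N(\mathbf{t'})\bigl(\Phi(\mathbf{t'})-M(\mathbf{t'})\mathbf{c}\bigr). \]
The $j$-th entry of $\Phi(\mathbf{t'})-M(\mathbf{t'})\mathbf{c}$ is $\Phi(t_j')-f(t_j')$; because the C-TEM has no feedback, $f(t_j)=\Phi(t_j)$, so this entry equals $\bigl(\Phi(t_j')-\Phi(t_j)\bigr)-\bigl(f(t_j')-f(t_j)\bigr)$, a difference of two increments over an interval of length at most $\|\mathbf{t'}-\mathbf{t}\|_\infty$.

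Next I would establish the two uniform estimates. The operator norm of $N(\mathbf{t'})$ (as a map $l^2\to l^2$) equals $\bigl(\min_{\|\mathbf{x}\|_{l^2}=1}\|M(\mathbf{t'})\mathbf{x}\|_{l^2}\bigr)^{-1}$, and $\|M(\mathbf{t'})\mathbf{x}\|_{l^2}^2=\sum_j g(t_j')^2$ with $g=\sum_{k=1}^{K}x_k\lambda(\cdot-k)$. I would bound this from below via the frame estimate underlying Lemma~\ref{th:reconstruction2}: the periodic analogue of (\ref{eq:4}) gives $\|g-Vg\|_{L^2_K}\le(T'/\tau)\|g\|_{L^2_K}$, hence $\|Vg\|_{L^2_K}^2\ge(1-T'/\tau)^2\|g\|_{L^2_K}^2$; since $\|Vg\|_{L^2_K}^2=K^{-1}\sum_j w_j\,g(t_j')^2$ with $w_j=s_{j+1}-s_j\le T'$, and $\|g\|_{L^2_K}\ge c_\lambda\|\mathbf{x}\|_{l^2}$ on the finite-dimensional space $V^2_K(\lambda)$, this yields $\|N(\mathbf{t'})\|\le C_N$ with $C_N$ depending only on $\lambda,K,T'$, and the same $C_N$ bounds $\|N(\mathbf{t})\|$. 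From $\|N(\mathbf{t})\|\le C_N$ I get the a priori bound $\|\mathbf{c}\|_{l^2}\le C_N\|\Phi(\mathbf{t})\|_{l^2}\le C_N\sqrt{J}\,\|\Phi\|_{L^\infty(I)}$ on the compact interval $I=[-K/2-\epsilon_0,\,K/2+\epsilon_0]$ containing all the $t_j,t_j'$; then $f'=\sum_{k=1}^{K}c_k\lambda'(\cdot-k)$ gives $\|f'\|_\infty\le\sqrt{K}\,\|\lambda'\|_\infty\|\mathbf{c}\|_{l^2}$, now bounded by a constant of $\lambda,\Phi,K,T,T'$, while $\Phi$ --- being $C^1$, as in the examples of Section~\ref{ssec:CrossingTEM} --- has a finite Lipschitz constant $L_\Phi$ on $I$. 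Hence each residual entry is $\le(L_\Phi+\|f'\|_\infty)\|\mathbf{t'}-\mathbf{t}\|_\infty$, so $\|\Phi(\mathbf{t'})-M(\mathbf{t'})\mathbf{c}\|_{l^2}\le\sqrt{J}\,(L_\Phi+\|f'\|_\infty)\,\|\mathbf{t'}-\mathbf{t}\|_\infty$, and combining with $\|N(\mathbf{t'})\|\le C_N$ gives the theorem with $\alpha=C_N\sqrt{J}\,(L_\Phi+\|f'\|_\infty)$, depending only on $\lambda,\Phi,K,T,T'$.

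The decomposition in the first step is what makes this short: thanks to it one never has to compare $N(\mathbf{t'})$ with $N(\mathbf{t})$, only to bound each one. The one genuinely delicate point is the \emph{uniform} lower bound $\min_{\|\mathbf{x}\|_{l^2}=1}\|M(\mathbf{t'})\mathbf{x}\|_{l^2}\ge c>0$ over the entire family of $T'$-dense timings --- this is exactly where the hypothesis $T'<\tau$ and the earlier Wirtinger/frame machinery (the periodic versions of (\ref{eq:4}) and Lemma~\ref{le:equiv}) are indispensable, and where one must check that the constant produced does not depend on the particular sequence or on $J$. The remaining subtlety is the reduction from an a priori unbounded spike count to the fixed bound $J<2\lceil K/T\rceil$ through Lemmas~\ref{th:J0} and~\ref{th:T'}.
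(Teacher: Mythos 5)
Your proposal is correct, but it takes a genuinely different route from the paper. The paper's proof works entirely through perturbation of the pseudo-inverse: it fixes $J=J_0=2\lceil K/T\rceil$ via Lemma~\ref{th:J0}, observes that $\mathbf{t}\mapsto N(\mathbf{t})$ is $C^1$ on the compact set $S_{T'}$ of $T'$-dense sequences (since $\lambda\in C^1$), extracts by compactness a uniform bound $\alpha_1$ on the entries of $N(\mathbf{t})$ and of its partial derivatives $\partial N/\partial t_i$, and then splits the error as $N(\mathbf{t'})\bigl(\Phi(\mathbf{t'})-\Phi(\mathbf{t})\bigr)+\bigl(N(\mathbf{t'})-N(\mathbf{t})\bigr)\Phi(\mathbf{t})$, estimating each term with Taylor's inequality and a crude Cauchy--Schwarz matrix bound. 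You never compare $N(\mathbf{t'})$ with $N(\mathbf{t})$: exploiting that the unperturbed reconstruction is exact ($N(\mathbf{t})\Phi(\mathbf{t})=\mathbf{c}$ because $\Phi(\mathbf{t})=M(\mathbf{t})\mathbf{c}$ and $N(\mathbf{t})M(\mathbf{t})=Id$), you collapse the error to the single residual $N(\mathbf{t'})\bigl(\Phi(\mathbf{t'})-M(\mathbf{t'})\mathbf{c}\bigr)$, whose entries are increments of $\Phi$ and of $f$ over intervals of length at most $\|\mathbf{t'}-\mathbf{t}\|_\infty$, and you control $\|N(\mathbf{t'})\|$ uniformly through the explicit singular-value bound coming from the periodic Wirtinger/frame estimate, $\sigma_{\min}(M(\mathbf{t'}))^2\ge (K/T')(1-T'/\tau)^2c_\lambda^2$. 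What each buys: the paper's compactness argument is softer and shorter but yields a non-constructive constant and requires differentiating the pseudo-inverse; your argument yields explicit constants and avoids any perturbation theory for $N$, at the price of an extra a priori bound on $\|f'\|_\infty$ (which you correctly obtain from $\|\mathbf{c}\|_{l^2}\le C_N\sqrt{J}\,\|\Phi\|_\infty$) and of making explicit the uniform lower frame bound that the paper leaves implicit in Lemma~\ref{th:reconstruction2}. Note that both arguments share the same unstated conveniences: $\Phi$ is taken Lipschitz (the paper uses $\|\Phi'\|_\infty$ without hypothesis), the reduction to boundedly many samples uses the convention of Lemma~\ref{th:J0}, and the resulting constant in fact also depends on $T$ through $J_0=2\lceil K/T\rceil$, in your version exactly as in the paper's.
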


\begin{proof}
	According to Lemma \ref{th:J0}, we can always suppose $J = J_0 = 2 \lceil K/T \rceil$. We introduce $\epsilon = \|\mathbf{t'} - \mathbf{t}\|_\infty < \epsilon_0$. Let $S_\delta := \{ \mathbf{t} = (t_1, \cdots t_{J_0}), t \text{ is increasing}, \mathbf{t} \text{ is $\delta$-dense} \}$. According to Lemma \ref{th:T'}, 
\[
	S_T + B(0, \epsilon_0) \subset S_{T'}	
\]
and $S_{T'}$ is compact (it is closed and bounded in $\mathbb{R}^{J_0}$). Lemma \ref{th:T'} ensures that $N(\mathbf{t})$ is well-defined in $S_{T}+B(0, \epsilon_0)$, and its expression is:
\[
	N(\mathbf{t}) = (M(\mathbf{t})^T M(\mathbf{t}))^{-1} M(\mathbf{t})^T .
\]
In particular, the coefficients of $N(\mathbf{t})$ are continuous with respect to the coefficients of $M(\mathbf{t})$. Therefore, if $\lambda$ is of class $C^1$, so are the coefficients of $N(\mathbf{t})$. Actually, if we denote by $N_{kj}(\mathbf{t})$ the coefficients of $N(\mathbf{t})$, and $F^i_{kj}(\mathbf{t}) = \frac{\partial}{\partial t_i} N_{kj}(\mathbf{t})$ the coefficients of $F^i(\mathbf{t}) = \frac{\partial}{\partial t_i} N(\mathbf{\mathbf{t}})$, then all those functions are continuous on the compact $S_{T'}$. Thus, we can find a constant $\alpha_1$ such that:
\[
	\forall \mathbf{t} \in S_{T'}, \quad | N_{kj}(\mathbf{t}) | \le \alpha_1 \quad \text{and} \quad | F^i_{kj}(\mathbf{t}) | \le \alpha_1 .
\]
For a matrix $A = ( A_{ij} )$ of size $K \times J_0$, and for $b \in \mathbb{R}^{J_0}$, using the Cauchy-Schwartz inequality, we have:
\begin{equation} \label{eq:ineq}
	\| Ab \|^2_{l^2} = \sum_{i=1}^K \Bigl( \sum_{j=1}^{J_0} A_{ij} b_j \Bigr)^2 \le   \sum_{i=1}^K \Bigl( \sum_{j=1}^{J_0} A_{ij}^2\Bigr) \cdot   \Bigl( \sum_{j=1}^{J_0} b_j^2\Bigr) \le K\cdot J_0^2 \cdot (\max A_{ij}^2) \cdot (\max b_j^2) .
\end{equation}
We also have the Taylor inequality:
\begin{equation} \label{eq:Taylor1}
		 | \Phi(t_i') - \Phi(t_i) | \le | t_i' - t_i | \cdot \sup_{\tilde t \in (t_i, t_i')} \Phi'(\tilde t) \le \epsilon \| \Phi' \|_\infty
\end{equation}
and
\begin{equation} \label{eq:Taylor2}
		 | N_{ik}(t') - N_{ij}(t) |  \le \sum_{j=1}^{J_0} |t_j' - t_j| \cdot \sup_{\tilde t \in (t_j, t_j')} | F^j_{ik}(\tilde t) |  \le \alpha_1 J_0 \epsilon .
\end{equation}

We now write 
\[
	\| N(\mathbf{t'}) \Phi(\mathbf{t'}) - N(\mathbf{t}) \Phi(\mathbf{t}) \|_{l^2} \le \| N(\mathbf{t'}) \bigl( \Phi(\mathbf{t'}) - \Phi(\mathbf{t}) \bigr) \|_{l^2} + \| \bigl( N(\mathbf{t'}) - N(\mathbf{t}) \bigr) \Phi(\mathbf{t})\|_{l^2} .
\]
Using inequalities (\ref{eq:ineq}) and (\ref{eq:Taylor1}), the first term is bounded above by:
\[
	 \| N(\mathbf{t'}) \bigl( \Phi(\mathbf{t'}) - \Phi(\mathbf{t}) \bigr) \|^2_{l^2} \le K  J_0^2  \cdot \max N_{ij}(\mathbf{t})^2 \cdot \max{(\Phi(t_i') - \Phi(t_i))^2} \le (K J_0^2 \alpha_1^2 \| \Phi '\|^2_\infty) \epsilon^2 .
\]
The second term is bounded in the same way with (\ref{eq:ineq}) and (\ref{eq:Taylor2}):
\[
	\| \bigl( N(\mathbf{t'}) - N(\mathbf{t}) \bigr) \Phi(\mathbf{t})\|^2_{l^2} \le K J_0^2 \cdot \max(N_{ij}(\mathbf{t'}) - N_{ij}(\mathbf{t}))^2 \cdot \max \Phi(t_i)^2 \le (K J_0^4 \alpha_1^2 \| \Phi \|_\infty^2) \epsilon^2 .
\]

Altogether, we proved that
\[
	\| N(\mathbf{t'}) \Phi(\mathbf{t'}) - N(\mathbf{t}) \Phi(\mathbf{t}) \|_{l^2} \le \alpha \cdot \epsilon \quad \text{with} \quad \alpha := \sqrt K J_0 \alpha_1 \sqrt{ \| \Phi' \|_\infty^2 + J_0^2 \|  \Phi \|_\infty^2} .
\]

\end{proof}


\subsection{Numerical Results}

We have implemented the method described above with the following parameters. We took $f(t) = \sum_{k=1}^K c_k \lambda(t-k)$ with $K=50$, $\lambda$ the cardinal spline of order $3$, and $c_k$ are chosen uniformly at random in $[-1, 1]$. Then, we normalized $f(t)$ so that $\| f \|_{\infty} = 1$. We encode such a signal with the Crossing-TEM $\mathcal{CT}_\Phi$ with:
\[
	\Phi(t) = \alpha \cdot \cos \Bigl(  2 \pi t   \Bigr),
\]
with $\alpha = 1.1$. It is easy to check that this TEM is $1$-dense, and therefore is invertible according to \cite{Aldroubi99}. The crossing times $(\tilde t_1, \cdots, \tilde t_J)$ are then recorded with a quantization error $\epsilon$. From those points, we calculate the matrix $M$ with elements $M_{ik} = \lambda(\tilde t_i - k)$, and the vector $\mathbf{\tilde f} = (\Phi(\tilde t_1), \cdots, \Phi(\tilde t_n))$, and we recover $\mathbf{\tilde c} = (\tilde c_1, \cdots \tilde c_K)$ with $\mathbf{\tilde c} = M^{\dagger} \mathbf{\tilde f}$. We then have the reconstructed signal $\tilde f(t) = \sum_{k=1}^K \tilde c_k \lambda(t-k)$. We then recorded the $L^2$ error between $f$ and $\tilde f$ for various $f$. In Figure 2, we plotted the mean and the 95\% confidence interval of the $L^2$ error for $1000$ different signals $f(t)$, with respect to the $l^2$ norm of $\{\tilde t_n - t_n\}_{n = 1 .. J_0}$. We recall that because the number of samples is finite, this $l^2$ error is equivalent to the $l^{\infty}$ error. The choice  of the $l^2$ quantization error leads to a linear behavior. Therefore, the $L^2$ error of the signals with respect to the $l^{\infty}$ quantization error of the time locations is sub-linear, as Theorem \ref{th:noise} stated.

\begin{figure}[!h]
\label{fig:noise}
\begin{center}

\begin{tikzpicture}
	\node at (0,0) {\includegraphics[scale=0.9]{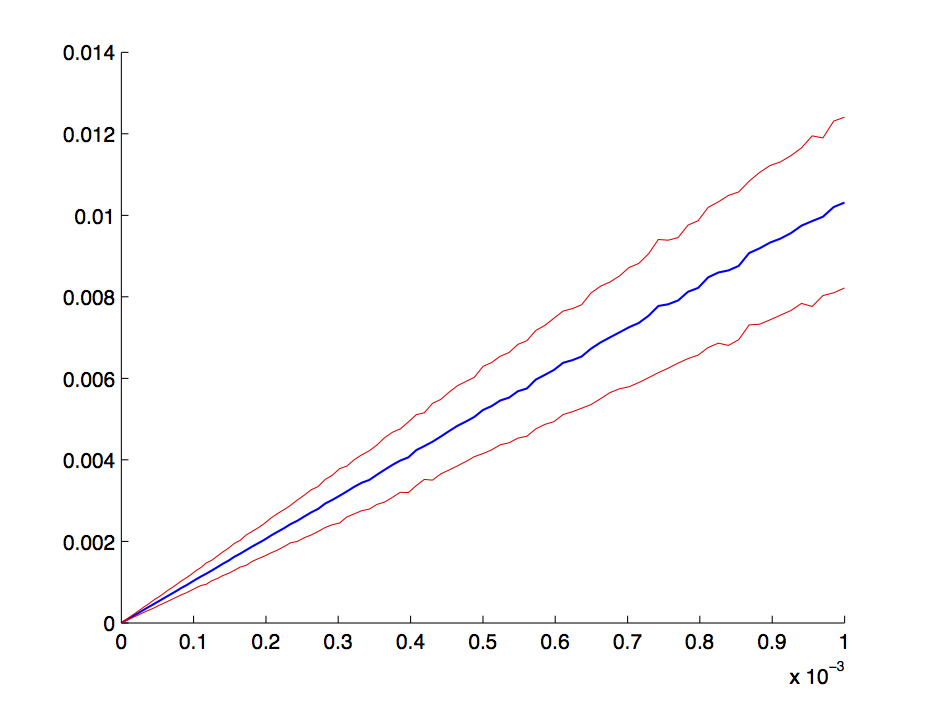}};
	\node at (6.5, 2.5) {\textcolor{blue}{mean error}};
	
	\node at (4, 0) (a) {};
	\node at (4,2.5) (b) {};
	\draw[<->] (a) to (b) [color=red, line width=2];
	
	\node at (5, -0.5) {\textcolor{red}{95\% confidence interval}};
	
	\node at (7.5, -4.2) {$\| \{\tilde t_n\} - \{t_n\} \|_{l^2}$};
	\node at (-4, 4.5) {$ \| \tilde f - f \|_{L^2}$};
\end{tikzpicture}

\end{center}
\caption{The $L^2$ error of the signal versus the $l^2$ quantization error of the time locations.}
\end{figure}


\section{Future Work}
The tools we used to prove the invertibility of dense C-TEMs are the following facts: $V^2(\lambda)$ is a RKHS, there exists an orthogonal projection onto this space, we can use Wirtinger's inequality on this space, and finally, there is an inequality of the form:
\[
	\forall f \in V^2(\lambda), \| f'\|_{L^2} \le C \| f \|_{L^2}
\]
saying that the map:
\[
	\begin{array}{lcll}
		D : & V^2(\lambda) \subset H^1 & \to & L^2 \\
		 & f & \mapsto & f'
	\end{array}
\]
is continuous. Therefore, similar results may be derived on a space $F$ where those requirements are satisfied. We have the general
\begin{theorem} \label{th:general}
	If $F$ is a closed subspace of $H^1$ such that the norms $\| . \|_{H^1}$ and $\| . \|_{L^2}$ are equivalent on $F$, i.e. there exists a constant $C$ such that:
	\[
		\forall f \in F, \ \| f'\|_{L^2} \le C \cdot \| f \|_{L^2} ,
	\]
	then, every C-TEM (and IF-TEM) which is $T$-dense, with $T < \tau := \pi / C$, is invertible.
\end{theorem}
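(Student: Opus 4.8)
The plan is to transcribe the proof of Theorem~\ref{th:main} almost verbatim, checking at each step that shift-invariance was never actually used. First I would isolate the two ingredients of that argument which looked specific to $V^2(\lambda)$ — namely that point evaluations $f \mapsto f(t_n)$ are bounded, so that $Vf$ in (\ref{def:V}) makes sense, and that there is an orthogonal projection onto the space — and observe that both hold for any closed subspace $F$ of $H^1$. Indeed, the Sobolev embedding $H^1(\mathbb{R}) \hookrightarrow C(\mathbb{R}) \cap L^\infty(\mathbb{R})$ already gives that $f \mapsto f(x)$ is continuous on $H^1$, hence on $F$; and since $\|f'\|_{L^2} \le C\|f\|_{L^2}$ forces $\|f\|_{H^1}^2 \le (1+C^2)\|f\|_{L^2}^2$, the $H^1$ and $L^2$ norms are equivalent on $F$, so $F$ is also closed in $L^2$ and the orthogonal projection $P := P_F : L^2 \to F$ is well-defined.

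Second, I would revisit the chain of inequalities (\ref{eq:1})--(\ref{eq:3}) and note that none of them uses the structure of $V^2(\lambda)$: equation (\ref{eq:1}) is just the definition of $Vf$, Wirtinger's inequality needs only $f, f' \in L^2$, and $T$-density of the sampling sequence gives $(t_n - s_n), (s_{n+1} - t_n) \le T/2$, yielding $\|f - Vf\|_{L^2}^2 \le \frac{T^2}{\pi^2}\|f'\|_{L^2}^2$ for every $f \in F$. Applying the hypothesis $\|f'\|_{L^2} \le C\|f\|_{L^2}$ with $\tau = \pi/C$ gives $\|f - Vf\|_{L^2} \le (T/\tau)\|f\|_{L^2}$. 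Since $f = Pf$ and $P$ is an orthogonal projection, $\|f - PVf\|_{L^2} = \|P(f - Vf)\|_{L^2} \le \|f - Vf\|_{L^2} \le (T/\tau)\|f\|_{L^2}$, so $PV : F \to F$ satisfies $\|Id - PV\|_F \le T/\tau < 1$ when $T < \tau$. Hence $PV$ is invertible on $F$, and exactly as in Theorem~\ref{th:main} the iteration $f_1 = PVf_0$, $f_{k+1} = f_1 + (Id - PV)f_k$ converges geometrically to $f$; since $f_1$, hence all $f_k$, is computable from the sampling times $\{t_n\}$ of a C-TEM, the C-TEM is invertible.

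Third, for the IF-TEM statement I would reuse the adjointness argument of Section~\ref{ssec:adjoint}. The operator $V'$ there is well-defined and bounded from $H^1$ to $L^2$ — that lemma, too, used only membership in $H^1$, not shift-invariance — so $PV' : F \to F$ makes sense, and the computation $\bra f \mid (PV')g\ket = \bra (PZ)f \mid g\ket$ goes through verbatim, since it only uses the reproducing property of $K_x$ on $F$, which holds because $F$ is a RKHS. Because $s_{n+1} - s_n \le T$ whenever the $t_n$ are $T$-dense, the same Wirtinger estimate gives $\|Id - PV'\|_F \le T/\tau < 1$, and since $(Id - PZ)^* = (Id - PV')$ on $F$ this forces $\|Id - PZ\|_F < 1$; the iterative reconstruction of Theorem~\ref{th:mainbis} with $Z$ in place of $V$ then applies.

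I do not expect a serious obstacle: the content of the theorem is precisely that the proof of Theorem~\ref{th:main} is modular. The one point requiring genuine (though minor) attention is the bookkeeping of which ambient space each operator acts on — in particular checking that $V$ and $V'$ map $F$ into $L^2$ using only $F \subset H^1$, and that closedness of $F$ in $H^1$ together with the norm equivalence yields closedness in $L^2$ so that $P$ exists — rather than any new analytic estimate.
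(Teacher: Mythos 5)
Your proposal is correct and takes essentially the same route as the paper's own (very terse) proof: verify that $F$ is a RKHS (the paper invokes the $H^1$ kernel $K_t(u)=e^{-|t-u|}/2$ where you invoke the Sobolev embedding — the same point), that an orthogonal projector onto $F$ exists, and that Wirtinger's inequality together with $\|f'\|_{L^2}\le C\|f\|_{L^2}$ reproduces the contraction estimate, and then rerun the proofs of Theorems \ref{th:main} and \ref{th:mainbis} verbatim. Your explicit observation that the norm equivalence transfers closedness of $F$ from $H^1$ to $L^2$ (so that $P_F$ on $L^2$ is legitimate) is a detail the paper glosses over, but it is not a different argument.
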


\begin{proof}
	Because $F \subset H^1 \subset L^2$ is closed, we already have the existence of an orthogonal projector on $F$. We can use the Wirtinger's inequality for $F \in L^2$. Finally, $F$ is a RKHS because $H^1$ is a RKHS for its norm (with kernel $K_t(u) = e^{- | t-u |}/2$), and because the $H^1$ norm and $L^2$ norm are equivalent on $F$.	Finally, the exact same proof of Theorem \ref{th:main} and Theorem \ref{th:mainbis} can be derived, which gives the result.
\end{proof}
Therefore, our result about TEMs on SISS can probably be extended in this direction. However, while it is easy to show that $V^2(\lambda)$ satisfies the conditions of Theorem \ref{th:general}, it may be more difficult to extend it to some other cases. Moreover, the case of SISS allowed us to deduce the algorithms for this case. 

Another direction to look at for future work is the behavior of the error. While we showed that the error is sub-linear in the $l^{\infty}$ quantization error, we noticed that this error appears to be actually linear in the $l^2$ quantization error. We could not prove this fact, nor could we give an explicit bound of the decay of the error. 

A final direction, and probably the most difficult one is the following remark. If $\lambda$ has finite support $[0, S]$, then all the elements of the matrix $M$ are zero, except on a large diagonal, and the matrix $M$ is left-invertible. Suppose that the TEM has sampled $J$ times during the interval $[k_1, k_2]$, then the value of the function on this interval depends on $K = k_2 - k_1 + S$ coefficients. In particular, if the TEM has density $T < 1$, and if the interval is large enough, we can ensure that $J \ge K$, so that there are more equations than coefficients involved. But we cannot conclude that the sub-matrix involving those equations and coefficients is still left-invertible. Aldroubi and Gröchenig showed that it is the case for cardinal splines SISS, because of the very special structure of those functions \cite{Aldroubi99}. This would allow us to reconstruct the signal independently of the past, and therefore provides us better algorithms for reconstruction, or bounds for the error.




\appendix

\bibliographystyle{plain}
\bibliography{SamplingBasedOnTiming}







\end{document}